\newtheorem{theorem}{Theorem}
\newtheorem{lemma}[theorem]{Lemma}
\newtheorem{definition}[theorem]{Definition}
\newtheorem{observation}[theorem]{Observation}
\newtheorem{corollary}[theorem]{Corrolary}
\newtheorem{algo}{Algorithm}
\journal{}
\begin{document}

\begin{frontmatter}



\title{New Bounds for Time-Dependent Scheduling with Uniform Deterioration}

\author{Angelos Gkikas\fnref{kings,ocado}}
\author{Dimitrios Letsios\fnref{kings}}
\author{Tomasz Radzik\fnref{kings}}
\author{Kathleen Steinh\"ofel\fnref{kings}}



\address[kings]{King's College London, United Kingdom}
\address[ocado]{Ocado Technology, Hatfield, AL10 9UL, United Kingdom}

\begin{abstract}
Time-dependent scheduling with linear deterioration involves determining when to execute jobs whose processing times degrade as their beginning is delayed.
Each job $i$ is associated with a release time $r_i$ and a processing time function $p_i(s_i)=\alpha_i + \beta_i\cdot s_i$, where $\alpha_i,\beta_i>0$ are constants and $s_i$ is the job's start time.
In this setting, the approximability of both single-machine minimum makespan and total completion time problems remains open.
Here, we take a step forward by developing new bounds and approximation results for the interesting special case of the problems with uniform deterioration, i.e.\ $\beta_i=\beta$, for each $i$.
The key contribution is a $O(1+1/\beta)$-approximation algorithm for the makespan problem and a $O(1+1/\beta^2)$ approximation algorithm for the total completion time problem.
Further, we propose greedy constant-factor approximation algorithms for instances with $\beta=O(1/n)$ and $\beta=\Omega(n)$, where $n$ is the number of jobs.
Our analysis is based on a new approach for comparing computed and optimal schedules via bounding pseudomatchings.


\end{abstract}



\begin{keyword}
Time-Dependent Scheduling \sep Linear Deterioration \sep Approximation Algorithms
\end{keyword}

\end{frontmatter}



\section{Introduction}
\label{}

Single-machine scheduling problems involve deciding when to process a set $\mathcal{J}=\{1,\ldots,n\}$ of $n$ jobs arriving over time, i.e.\ each job $i\in\mathcal{J}$ is associated with a release time $r_i\in\mathbb{R}^+$, using a machine that may execute at most one job per time so as to optimize some objective function $f(\vec{C})$, e.g.\ the makespan $\max_{i\in\mathcal{J}}\{C_i\}$ or the total completion time $\sum_{i\in\mathcal{J}}C_i$, where $\vec{C}=(C_1,\ldots,C_n)$ is the vector of job completion times. 
Prior literature largely assumes that each job $i\in\mathcal{J}$ has a fixed processing time $p_i\in\mathbb{R}^+$. 
However, this assumption can be fairly strong.
In various contexts, e.g.\ production scheduling with machine degradation and delivery scheduling in road networks with varying traffic, the time at which the execution of a job begins significantly affects its processing time.
Scheduling problems where the processing time of each job $i\in\mathcal{J}$ is a function $p_i(s_i)$ of its start time $s_i$ are typically referred to as \emph{time-dependent scheduling problems}.

Previous work investigates time-dependent scheduling problems with processing time functions $p_i(s_i)=\alpha_i+\beta_i s_i$, where $\alpha_i\in\mathbb{R}^+$ is the \emph{fixed part} and $\beta_i\cdot s_i$ is the \emph{variable part} depending on the deterioration rate $\beta_i\in\mathbb{R}^+$ and the start time $s_i$, for each job $i\in\mathcal{J}$. 
Such problems are referred to as \emph{scheduling with linear deterioration} and model settings where delaying the beginning of a job execution by one unit of time results in an increase of the job's processing time by $\beta_i$ units of time.  
In this context, the single-machine problems of minimizing the makespan and the total completion time are open from an algorithmic viewpoint. 
Using the standard 3-field scheduling notation, the problems can be denoted as $1|r_i,p_i(t)=\alpha_i+\beta_i\cdot s_i|C_{\max}$ and $1|r_i, p_i(t)=\alpha_i+\beta_i\cdot s_i|\sum C_i$.
When all jobs have equal release times, the makespan problem is polynomially solvable \cite{Gupta1988,Mosheiov1994}, while the complexity of the total completion time problem is unknown and conjectured to be $\mathcal{NP}$-hard \cite{Gawiejnowicz2020}.
When the jobs have arbitrary release times, both problems are known to be strongly $\mathcal{NP}$-hard \cite{Cheng1998,Cheng1998b,Lenstra1977}.
The best known algorithms are based on iterative subproblem decomposition, e.g.\ dynamic programming and branch-and-bound \cite{Bosio2009,DeSouza2022,Lee2008}, but have exponential running times.

The two problems have attracted attention in the special cases with
(1) proportional linear deterioration, i.e.\ $p_i(t)=\beta_i\cdot s_i$ (eq.\ $\alpha_i=0$), for each $i\in\mathcal{J}$, and (2) fixed processing times, i.e.\ $p_i(s_i)=\alpha_i$ (eq.\ $\beta_i=0$), for each $i\in\mathcal{J}$. 
In the former case, the makespan problem $1|r_i,p_i(s_i)=\beta_i\cdot t|C_{\max}$ is optimally solvable in $O(n\log n)$ time \cite{Liu2012,Miao2012}, while the best known algorithm for the total completion problem $1|r_i,p_i(s_i)=\beta_i\cdot t|\sum C_i$ is $(1+\beta_{\max})$-approximate \cite{Chai2020}.
In the latter case, makespan problem is polynomially solvable via a greedy algorithm \cite{Lawler1973}, while the total completion time problem is strongly $\mathcal{NP}$-hard, admitting a Polynomial-Time Approximation Scheme (PTAS) and greedy constant-factor approximation algorithms \cite{Afrati1999,Karger2010}.

In addition to the above, there exist various complexity and approximation results for problem generalizations (e.g.\ multiprocessor environments \cite{Chen1996,Ji2009,Kang2007,Li2014,Liu2013,Yu2013}), relaxations (e.g.\ preemptive versions \cite{Ng2010}), and variants (e.g.\ step and position-dependent processing time functions \cite{Cheng2001,Yang2021} and uncertainty \cite{Letsios2021royal}).
Surveys relevant to time-dependent scheduling algorithms can be found in \citep{Cheng2004,Gawiejnowicz2020,Letsios2020}.
Table~1 summarizes results closely related to this manuscript.
Last but not least, there exist investigations on interesting time-dependent scheduling applications, including production and delivery scheduling \citep{He2020,Letsios2021lex}, defending aerial threats \cite{Helvig1998}, fire fighting \citep{Pappis2010}, and personel scheduling \citep{Xu2021}.

\begin{table}
\begin{tabular}{ |lccc| } 
\hline
\multirow{2}{*}{Scheduling Problem} & \multirow{2}{*}{Complexity} & Best Known & 
\multirow{2}{*}{Ref.} \\
& & Algorithm & \\
\hline
\multicolumn{4}{|l|}{\textbf{Linear Deterioration}} \\
$1|p_i(t)=\alpha_i+\beta_i\cdot t|C_{\max}$ & $\mathcal{P}$ & $O(n\log n)$-time & \cite{Gupta1988} \\
$1|r_i,p_i(t)=\alpha_i+\beta_i\cdot t|C_{\max}$ & $\mathcal{NP}$-hard & Exp.\ & \cite{Cheng1998} \\
$1|p_i(t)=\alpha_i+\beta_i\cdot t|\sum C_i$ & ? & Exp.\ & \cite{Gawiejnowicz2020} \\
$1|r_i,p_i(t)=\alpha_i+\beta_i\cdot t|\sum C_i$ & $\mathcal{NP}$-hard & Exp.\ & \cite{Lenstra1977} \\
\hline
\multicolumn{4}{|l|}{\textbf{Uniform Deterioration}} \\
$1|r_i,p_i(t)=\alpha_i+\beta\cdot s_i|C_{\max}$ & $\mathcal{NP}$-hard & $(1+1/\beta)$-approx.\ & [*] \\
$1|r_i,p_i(t)=\alpha_i+\beta\cdot s_i|\sum C_i$ & $\mathcal{NP}$-hard & $(1+1/\beta^2)$-approx.\ & [*] \\
\hline
\multicolumn{4}{|l|}{\textbf{Simple Deterioration}} \\
$1|r_i,p_i(t)=\beta_i\cdot t,pmtn|C_{\max}$ & $\mathcal{P}$ & $O(n\log n)$-time & \citep{Ng2010} \\
$1|r_i,p_i(t)=\beta_i\cdot t|C_{\max}$ & $\mathcal{P}$ & $O(n\log n)$-time & \citep{Liu2012,Miao2012} \\
$1|r_i,p_i(t)=\beta_i\cdot t,pmtn|\sum C_i$ & $\mathcal{P}$ & $O(n\log n)$-time & \citep{Ng2010} \\
$1|r_i,p_i(t)=\beta_i\cdot t|\sum C_i$ & $\mathcal{NP}$-hard & $(1+\beta_{\max})$-approx & \citep{Liu2012} \\
\hline
\multicolumn{4}{|l|}{\textbf{Fixed Processing Times}} \\
$1|r_i,p_i(t)=\alpha_i|C_{\max}$ & $\mathcal{P}$ & $O(n\log n)$-time & \citep{Lawler1973} \\ 
$1|r_i,p_i(t)=\alpha_i|\sum C_j$ & $\mathcal{NP}$-hard & PTAS & 
\citep{Lenstra1977,Afrati1999} \\ 
\hline
\end{tabular}
\label{Table:Approximation_Results}
\caption{Algorithmic results for single-machine time-dependent scheduling problems. The stars [*] indicate results obtained in the current manuscript.}
\end{table}

\paragraph{Contributions and paper organization}

Despite the aforementioned literature, the approximability of the single-machine time-dependent scheduling problems with jobs arriving over time, linear deterioration, the makespan and total completion time objectives remains unsettled.
This manuscript takes a step forward in this direction by focussing on the special case with uniform deterioration, i.e.\ the problems $1|r_i,p_i(s_i)=\alpha_i+\beta\cdot C_{\max}$ and $1|r_i,p_i(s_i)=\alpha_i+\beta\cdot \sum C_i$.
To the authors knowledge, no approximation algorithms are known for those. 
Our main contribution is the analysis of greedy algorithms and the derivation of approximation results based on a new approach for bounding time-dependent scheduling problems using pseudomatchings.
The first part of the manuscript (Sections 2-5) is devoted to the makespan problem and the last part (Section 6) covers the total completion time problem.
In more detail, the manuscript proceeds as follows.


Section~\ref{Section:Preliminaries} formally describes the problem and expresses the makespan of a feasible schedule as a function of the job processing times and idle periods. 
Section~\ref{Section:Pseudomatchings} introduces our pseudomatching concepts and demonstrates their bounding properties.
Section~\ref{Section:Basic_Algorithms} analyzes two basic algorithms that we call \emph{Non-Idling} and \emph{Non-Interfering} \cite{Chretienne2016,Karger2010}.
The former avoids idle periods by always scheduling a pending job when the machine becomes available, while the latter introduces idle periods (i.e.\ delays pending jobs) to prioritize (i.e.\ avoid interfering with) jobs that arrive later\footnote{A job is \emph{pending} if it has been released, but has not been processed.}. 
Non-Idling optimally solves the problems $1|r_j,p_j(s_j)=\beta_j\cdot s_j|C_{\max}$ and $1|r_j,p_j(s_j)=\alpha_j|C_{\max}$ and is similar to other standard scheduling algorithms, e.g.\ List Scheduling and First-Come First Served. 
Non-Interfering extends the standard Shortest Processing Time First algorithm, which is optimal for $1|p_j(s_j)=\alpha_j+\beta\cdot t|C_{\max}$ instances with a single release time, to instances with arbitrary release times. 
On the positive side, we prove that the two algorithms achieve constant approximation ratios for the special cases of our problem with $\beta\leq 1/n$ and $\beta\geq n+1$, respectively.
On the negative side, we show that both algorithms are $\Omega((1+\beta)^n)$-approximate in the worst case.

The above negative result demonstrates an interesting degeneracy of the problem: misplacing just a single job may result in severe solution quality degradation.
Despite this pathological finding, Section~\ref{Section:ECTF} shows the existence of a time-dependent priority policy, namely Earliest Completion Time First, that achieves an $(3+1/\beta)$-approximation ratio for $1|r_j,p_j(s_j)=\alpha_j+\beta\cdot t|C_{\max}$.  
Next, we turn our attention to the total completion time objective.
Prior literature derives equivalence relationships between the makespan and the sum of completion times.
For single-machine instances with fixed processing times, an optimal schedule for makespan is 2-approximate for the total completion time and vice versa \cite{Stein1997}.
We extend this approximation equivalence relationship to the time-dependent scheduling context.
On one hand, we show that any $\rho$-approximation algorithm for the total completion time problem is $(1+\rho)$-approximate for minimizing makespan.
On the other hand, we show that $\rho$-approximation algorithm for makespan is $(1+1/\beta)\rho$-approximate for the total completion time. 
This last finding implies the existence of a $O(1+1/\beta^2)$-approximation algorithm for the total completion time problem.

\section{Preliminaries}
\label{Section:Preliminaries}

Next, we formally define the time-dependent scheduling problem with uniformly deteriorating processing times and express the makespan of a feasible schedule as a weighted sum of the fixed processing times and gap lengths. 

\subsection{Problem Definition.}
An instance of the problem consists of a set $\mathcal{J}=\{1,\ldots,n\}$ of jobs that have to be executed by a single machine that may execute at most one job at a time.
Each job must be executed non-preemptively, i.e.\ in a continuous time interval without interruptions until it completes.
For each job $i\in\mathcal{J}$, a linearly increasing function $p_i(s_i)=\alpha_i+\beta\cdot s_i$  
specifies the processing time of $i$ if it begins at time $s_i$.
We refer to the terms $\alpha_i$ and $\beta\cdot s_i$ as \emph{fixed} and \emph{variable} processing time, respectively, where $\beta>0$ is a constant rate at which $p_i(s_i)$ is increased per unit of time that the start time $s_i$ is delayed.
Given two jobs $i,j\in\mathcal{J}$, if $\alpha_i<\alpha_j$, then we say that $i$ is \emph{shorter} than $j$ and that $j$ is \emph{longer} than $i$. 
Job $i\in\mathcal{J}$ is released at time $r_i$, i.e.\ may only begin processing at a time $s_i\geq r_i$.
W.l.o.g.\ $r_{\min}=\min_{i\in\mathcal{J}}\{r_i\}=0$.
Let $C_i$ be the completion time of job $i$, i.e.\ $C_i=\alpha_i+(1+\beta)s_i$. 
The objective is to find a feasible schedule such that the makespan, i.e.\ the time $T=\max_{i\in\mathcal{J}}\{C_i\}$ at which the last job completes, is minimized.
Given a time $t$, denote by $\mathcal{P}(t)$ the set of \emph{pending jobs}, i.e.\ the ones which have been released but have not begun processing before $t$. 
At each time $t$ that the machine becomes available, a feasible schedule specifies the next job to begin from time $t$ and onward.

\subsection{Makespan Expression.}
Due to release times, optimal schedules may require \emph{gaps}, i.e.\ maximal idle time intervals during which no job is processed (Figure~\ref{Figure:Compact_Schedule}).
Consider a feasible schedule $\mathcal{S}$ and number the jobs in increasing order $s_1<\ldots<s_n$ of their start times. 
Denote the gap between jobs $i-1$ and $i$ by $q_i=s_i-C_{i-1}$, for $i\in\{1,\ldots,n\}$, where $C_0=0$. 
If $q_i=0$, then there is no idle period between jobs $i-1$ and $i$.
Lemma~\ref{Lemma:Delay_Propagation} expresses the makespan of a feasible schedule w.r.t.\ gaps and fixed processing times.
This is an adaptation of standard expressions in the time-dependent scheduling literature \cite{Gawiejnowicz2020}, but now accounts for gaps because release times.
Lemma~\ref{Lemma:New_Makespan_Expression} derives an alternative expression of the fixed processing time contributions to the makespan.

\begin{figure}[!h]
\begin{center}
\includegraphics[scale=0.55]{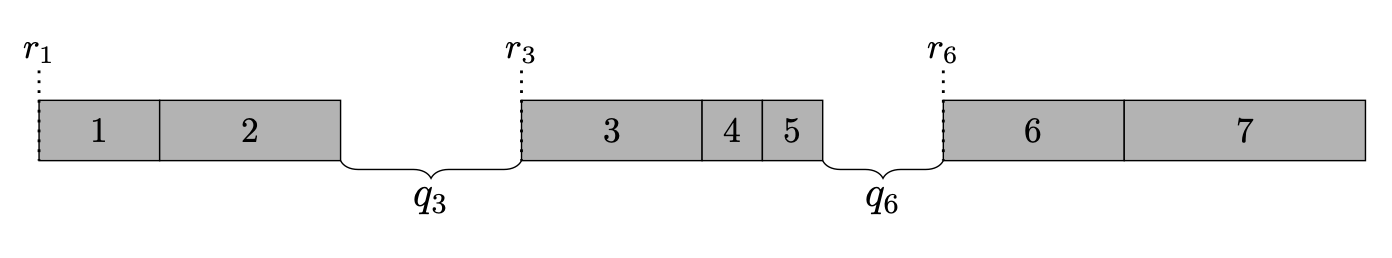}
\end{center}
\caption{Illustration of feasible schedule with seven jobs and two gaps, e.g.\ during the time interval $[C_2,s_3)$. There is an optimal schedule such that, if job $i$ begins right after a gap, i.e.\ $q_i>0$, then $s_i=r_i$.}
\label{Figure:Compact_Schedule}
\end{figure}

\begin{lemma}
\label{Lemma:Delay_Propagation}
Consider a feasible schedule $\mathcal{S}$ and suppose that the jobs are numbered in increasing order of their start times in $\mathcal{S}$, i.e.\ $s_1\leq\ldots\leq s_n$. 
Then, the makespan of $\mathcal{S}$ is:
\begin{align*} 
T=\sum_{i=1}^n(1+\beta)^{n-i+1}q_i+\sum_{i=1}^n(1+\beta)^{n-i}\alpha_i
\end{align*}
\end{lemma}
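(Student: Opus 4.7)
The plan is to derive the closed form by setting up a one-step recurrence for the completion times and unrolling it. Since jobs are indexed in order of increasing start time on a single non-preemptive machine, the last job to complete is job $n$, so $T=C_n$, and it suffices to establish the formula for $C_n$.

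First, I would use the definition of the gaps to write $s_i=C_{i-1}+q_i$ for every $i\in\{1,\ldots,n\}$ (with the convention $C_0=0$), which is immediate from $q_i=s_i-C_{i-1}$. Substituting this into the identity $C_i=\alpha_i+(1+\beta)s_i$ gives the linear recurrence
\begin{equation*}
C_i \;=\; (1+\beta)\,C_{i-1} \;+\; (1+\beta)\,q_i \;+\; \alpha_i, \qquad C_0=0.
\end{equation*}

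Next, I would unroll this recurrence by a straightforward induction on $i$. Assuming inductively that $C_{i-1}=\sum_{j=1}^{i-1}(1+\beta)^{i-j}q_j+\sum_{j=1}^{i-1}(1+\beta)^{i-1-j}\alpha_j$, multiplying by $(1+\beta)$ and adding the new $(1+\beta)q_i+\alpha_i$ contribution yields the same expression with $i-1$ replaced by $i$ and the exponents shifted accordingly. Setting $i=n$ and using $T=C_n$ produces
\begin{equation*}
T \;=\; \sum_{i=1}^n (1+\beta)^{n-i+1} q_i \;+\; \sum_{i=1}^n (1+\beta)^{n-i}\alpha_i,
\end{equation*}
which is the claimed identity.

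The only subtlety worth stating explicitly is why $T=C_n$: on a single machine with non-preemption and an indexing in increasing order of start times, we have $s_n\ge s_i$ for all $i<n$ and, since distinct executions cannot overlap, $C_n\ge C_i$ as well. Everything else is a mechanical unrolling of a linear recurrence, so I do not expect any real obstacle; the only care needed is bookkeeping the exponents $n-i+1$ on the gap terms versus $n-i$ on the fixed-processing-time terms, which comes directly from the extra factor of $(1+\beta)$ in front of $q_i$ in the recurrence.
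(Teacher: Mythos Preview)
Your proposal is correct and follows essentially the same approach as the paper: both set up the one-step recurrence $C_k=(1+\beta)(C_{k-1}+q_k)+\alpha_k$ from $s_k=C_{k-1}+q_k$ and unroll it by induction to obtain $C_k=\sum_{i=1}^k(1+\beta)^{k-i}[(1+\beta)q_i+\alpha_i]$, then take $k=n$. Your added remark justifying $T=C_n$ is a minor clarification the paper leaves implicit.
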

\begin{proof}
We show by induction on $k\in\{1,\ldots,n\}$ that $C_k=\sum_{i=1}^k(1+\beta)^{k-i}[(1+\beta)q_i+\alpha_i]$.
For the induction basis, it clearly holds that $C_1=(1+\beta)q_1+\alpha_1$, since job $1$ begins at time $s_1=q_1$.
For the induction step, suppose that the equality is true with index $k-1$.
Using the fact that $s_k=C_{k-1}+q_k$ and the induction hypothesis:
\begin{align*}
C_k & = (1+\beta)[C_{k-1}+q_k] + \alpha_k  \\
& = (1+\beta)\left[\sum_{i=1}^{k-1}(1+\beta)^{(k-1)-i}
\left[(1+\beta)q_i+\alpha_i\right] + q_k\right] + \alpha_k \\
& = \sum_{i=1}^{k}(1+\beta)^{k-i}\left[(1+\beta)q_i+ \alpha_i\right]
\end{align*}
\end{proof}

Lemma~\ref{Lemma:Delay_Propagation} has the following implications. 
First, if all jobs begin at time $t$ and are executed without any gap between them, then they complete at $T = (1+\beta)^nt+\sum_{i=1}^n(1+\beta)^{n-i}\alpha_i$. 
Second, when all jobs have equal release times, there exists always an optimal schedule without gaps and greedily scheduling the jobs in non-decreasing order $\alpha_1\leq\ldots\leq\alpha_n$ of their fixed processing times is optimal \cite{Gawiejnowicz2020}.
Third, for any subset $\mathcal{J}'=\{\gamma(1),\ldots,\gamma(k)\}$ of jobs sorted in non-decreasing order $\alpha_{\gamma(1)}\leq\ldots\leq\alpha_{\gamma(k)}$ of their fixed processing times and executed consecutively without gaps starting at time $t$, we get the lower bound $T\geq (1+\beta)^kt+ \sum_{i=1}^k(1+\beta)^{k-i}\alpha_{\gamma(i)}$ on the makespan of any feasible schedule.


\begin{lemma}
\label{Lemma:New_Makespan_Expression}
Consider a feasible schedule $\mathcal{S}$ and number the jobs in increasing order $s_1\leq\ldots\leq s_n$ of their start times in $\mathcal{S}$.
Then $S$ has fixed processing time cost $\sum_{i=1}^n(1+\beta)^{n-i}\alpha_i=\sum_{i=1}^n\alpha_i+\sum_{k=2}^{n}\beta(1+\beta)^{n-k}\left(\sum_{i=1}^{k-1}\alpha_i\right)$.
\end{lemma}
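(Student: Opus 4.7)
The plan is to prove the identity directly by expanding $(1+\beta)^{n-i}$ via the standard finite geometric series identity and then swapping the order of summation. No induction on $n$ should be necessary; this is really a manipulation of a double sum.

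First I would separate the $i=n$ term, which contributes exactly $\alpha_n$, and for $1\le i\le n-1$ apply the identity
\begin{equation*}
(1+\beta)^{n-i} \;=\; 1 + \beta\sum_{j=0}^{n-i-1}(1+\beta)^{j},
\end{equation*}
which follows from $(1+\beta)^{m}-1=\beta\bigl[(1+\beta)^{m-1}+\cdots+(1+\beta)+1\bigr]$. Multiplying by $\alpha_i$ and summing over $i$ gives
\begin{equation*}
\sum_{i=1}^n(1+\beta)^{n-i}\alpha_i \;=\; \sum_{i=1}^n\alpha_i \;+\; \beta\sum_{i=1}^{n-1}\alpha_i\sum_{j=0}^{n-i-1}(1+\beta)^{j}.
\end{equation*}
The first piece already matches the leading $\sum_{i=1}^n\alpha_i$ on the right-hand side of the claim, so it only remains to rewrite the second piece as $\sum_{k=2}^{n}\beta(1+\beta)^{n-k}\sum_{i=1}^{k-1}\alpha_i$.

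This is a straightforward change of index together with a swap in the order of summation. Substituting $k=n-j$ converts the inner sum into $\sum_{k=i+1}^{n}(1+\beta)^{n-k}$, producing the double sum $\beta\sum_{i=1}^{n-1}\sum_{k=i+1}^{n}(1+\beta)^{n-k}\alpha_i$ over the index set $\{(i,k):1\le i<k\le n\}$. Reindexing by $k$ on the outside and $i$ on the inside yields exactly $\sum_{k=2}^{n}\beta(1+\beta)^{n-k}\sum_{i=1}^{k-1}\alpha_i$, completing the identity.

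The only place to be careful is the edge case $i=n$ in the geometric expansion (where the inner sum over $j$ is empty), which is why I split off $\alpha_n$ at the start; everything else is routine bookkeeping, so I do not expect a substantive obstacle.
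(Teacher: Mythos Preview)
Your proof is correct and follows essentially the same route as the paper: expand $(1+\beta)^{n-i}$ via the finite geometric series identity $(1+\beta)^{m}=1+\beta\sum_{j=0}^{m-1}(1+\beta)^j$, then swap the order of summation over the triangular index set $\{(i,k):1\le i<k\le n\}$. The only cosmetic difference is that the paper motivates the geometric expansion through a scheduling interpretation (the contribution of $\alpha_i$ to each later job's processing time), whereas you proceed purely algebraically.
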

\begin{proof}
Consider a job $i\in\mathcal{J}$.
Because $i$ is scheduled in the $i$-th position of $\mathcal{S}$, its execution increases the start time, and therefore the processing time, of every job in the set $\{i+1,,\ldots,n\}$.
Specifically, the processing time of job $i+1$ is increased by $\beta\alpha_i$, of job $i+2$ by $\beta(1+\beta)\alpha_i$ and so on. 
That is, the processing time of job $n$ is increased by $\beta(1+\beta)^{n-i-1}\alpha_i$.
Hence, the overall contribution of $\alpha_i$ to the makespan is $1+\sum_{j=1}^{n-i}\beta(1+\beta)^{j-1}\alpha_{i}=(1+\beta)^{n-i}\alpha_i$.
Using this geometric series sum and Lemma~\ref{Lemma:Delay_Propagation}, the fixed processing time distribution to $\mathcal{S}$ can be expressed $A=\sum_{i=1}^n\alpha_i+\sum_{k=1}^{n-1}\left(\sum_{i=1}^{n-k}\beta(1+\beta)^{n-k-i}\right)\alpha_k$.
Next, we rearrange the sum so that fixed processing time terms $\alpha_i$ with the same weight $\beta(1+\beta)^{n-k}$ are grouped together, for $i\in\mathcal{J}$ and $k\in\{2,\ldots,n\}$.
That is, $A=\sum_{i=1}^n\alpha_i+\sum_{k=2}^{n}\beta(1+\beta)^{n-k}\left(\sum_{i=1}^{k-1}\alpha_i\right)$.
\end{proof}


\section{Bounding Pseudomatchings}
\label{Section:Pseudomatchings}

To analyze the performance of our algorithms for $1|r_j,p_j(t)=\alpha_j+\beta\cdot t|C_{\max}$, we need an approach for upper and lower bounding the (fixed processing time) load completed by a feasible and an optimal schedule, respectively, up to any time $t$.
To this end, we introduce the \emph{$\rho$-pseudomatching} and \emph{weak pseudomatching} concepts that allow bounding sums and geometric series incorporating the $\beta$ parameter, respectively.
Definition~\ref{Def:Bounding_Graph} defines the so-called \emph{bounding graph} that is used for comparing schedules computed by algorithms with optimal schedules.
Definition~\ref{Def:S_Pseudomatchings} and Lemma~\ref{Lem:S_Pseudomachings} summarize a core argument used for analyzing the Non-Interfering algorithm (Section~4).
Definition~\ref{Def:G_Pseudomatchings} and Lemma~\ref{Lem:G_Pseudomachings} describe a main argument in the analysis of the Earliest Completion Time First algorithm (Section~5).
The main technical difficulty in deriving approximation bounds (Sections~4-5) is showing the existence of these pseudomatchings for a schedule computed by an algorithm.

\begin{definition}[Bounding graph]
\label{Def:Bounding_Graph}
Let $\mathcal{A}=\{a_1,\ldots,a_k\}$ and $\mathcal{O}=\{o_1,\ldots,o_k\}$ be two equal-cardinality indexed sets of positive real numbers.
We refer to the complete bipartite graph $G=(\mathcal{A}\cup\mathcal{O},\mathcal{A}\times\mathcal{O})$ as the \emph{bounding graph} of $\mathcal{A}$ and $\mathcal{O}$.
\end{definition}

\begin{definition}[$\rho$-pseudomatching]
\label{Def:S_Pseudomatchings}
Given two equal-cardinality indexed sets $\mathcal{A}$ and $\mathcal{O}$ of positive real numbers and their bounding graph $G$, we say that a subset $M\subseteq\mathcal{A}\cup\mathcal{O}$ of edges is a \emph{$\rho$-pseudomatching} if the following properties  hold:
\begin{enumerate}
    \item[4.1] Each node $a_i\in\mathcal{A}$ appears exactly once as an endpoint of a $M$ edge.
    \item[4.2] Each node $o_j\in\mathcal{O}$ appears at most $\rho$ times as an endpoint of a $M$ edge.
    \item[4.3] For each $(a_i,o_j)\in M$, it holds that $a_i\leq o_j$.
\end{enumerate}
\end{definition}

\begin{lemma}
\label{Lem:S_Pseudomachings}
Consider two equal-cardinality indexed sets $\mathcal{A}$ and $\mathcal{O}$ of positive real numbers. 
If the corresponding bounding graph $G$ admits a $\rho$-pseudomatching $M$, then:
\begin{equation*} 
\sum_{a_i\in\mathcal{A}}a_i\leq\rho\left[\sum_{o_j\in\mathcal{O}}o_j\right]. 
\end{equation*}
\end{lemma}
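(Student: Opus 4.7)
The plan is to prove the inequality by a straightforward double-counting of edges in $M$, chaining one estimate per pseudomatching property.

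First, I would apply property 4.3 edgewise: since $a_i \le o_j$ for every $(a_i,o_j) \in M$, summing over $M$ gives
\begin{equation*}
\sum_{(a_i,o_j)\in M} a_i \;\le\; \sum_{(a_i,o_j)\in M} o_j.
\end{equation*}
Then I would rewrite the left-hand side using property 4.1: because each $a_i \in \mathcal{A}$ is the endpoint of exactly one edge of $M$, the sum counts each $a_i$ exactly once, so $\sum_{(a_i,o_j)\in M} a_i = \sum_{a_i \in \mathcal{A}} a_i$.

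Next, I would handle the right-hand side using property 4.2: each $o_j \in \mathcal{O}$ is the endpoint of at most $\rho$ edges of $M$, hence it contributes at most $\rho\, o_j$ to the edge sum. Combined with the positivity of the $o_j$ values (so that missing contributions only help the inequality), this yields $\sum_{(a_i,o_j)\in M} o_j \le \rho \sum_{o_j \in \mathcal{O}} o_j$. Chaining the three estimates gives the claim.

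I do not expect any real obstacle here; the argument is purely combinatorial and relies only on the definition of a $\rho$-pseudomatching together with positivity of the elements of $\mathcal{O}$ (which is given by hypothesis). The only mild care point is the inequality (rather than equality) in the $\mathcal{O}$-side bound, which occurs precisely because some $o_j$ may participate in strictly fewer than $\rho$ edges of $M$; positivity ensures this slack is harmless.
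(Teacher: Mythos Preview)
Your proposal is correct and follows essentially the same double-counting argument as the paper's proof: the paper groups the edge sum by the $\mathcal{O}$-side endpoint (defining $\mathcal{A}_j=\{a_i:(a_i,o_j)\in M\}$ and bounding $\sum_{a_i\in\mathcal{A}_j}a_i\le\rho\,o_j$ via Properties~4.2--4.3), while you sum over edges first and then invoke the properties on each side, but these are the same computation in a different order.
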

\begin{proof}
Denote by $\mathcal{A}_j=\{a_i:(a_i,o_j)\in M\}$ the subset of $\mathcal{A}$ elements matched with element $o_j\in\mathcal{O}$ in $M$.
Because of Property~4.1, each $a_i$ is matched exactly once, thus $\sum_{a_i\in\mathcal{A}}a_i=\sum_{o_j\in\mathcal{O}}\sum_{a_i\in\mathcal{A}_j}a_i$.
Due to Properties~4.2-4.3, we have that $\sum_{a_i\in\mathcal{A}_j}a_i\leq \rho\cdot o_j$, for each $o_j\in\mathcal{O}$.
Therefore,  we conclude that $\sum_{a_i\in\mathcal{A}}a_i\leq\rho[\sum_{o_j\in\mathcal{O}}o_j]$.
\end{proof}

\begin{definition}[Weak pseudomatching]
\label{Def:G_Pseudomatchings}
Given two equal-cardinality indexed sets $\mathcal{A}$ and $\mathcal{O}$ of positive real numbers and their bounding graph $G$, we say that a subset $M\subseteq\mathcal{A}\cup\mathcal{O}$ of edges is a \emph{weak pseudomatching} if the following  hold:
\begin{enumerate}
    \item[6.1] Each node $a_i\in\mathcal{A}$ appears exactly once as an endpoint of a $M$ edge.
    \item[6.2] For each $(a_i,o_j)\in M$, it holds that $i>j$ and $a_i\leq o_j$.
\end{enumerate}
\end{definition}

\begin{lemma}
\label{Lem:G_Pseudomachings}
Consider two equal-cardinality indexed sets $\mathcal{A}$ and $\mathcal{O}$ of positive real numbers. 
If the corresponding bounding graph $G$ admits a weak pseudomatching $M$, then:
\begin{equation*} 
\sum_{a_i\in\mathcal{A}}(1+\beta)^{n-i}a_i\leq\left(1+\frac{1}{\beta}\right)\left[\sum_{o_j\in\mathcal{O}}(1+\beta)^{n-j}o_j\right]. 
\end{equation*}
\end{lemma}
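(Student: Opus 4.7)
The plan is to adapt the per-bucket counting argument from the proof of Lemma~\ref{Lem:S_Pseudomachings} by replacing the cardinality bound $\rho$ with a geometric-series bound tailored to the exponentially weighted sums. For each $o_j \in \mathcal{O}$, I would define $\mathcal{A}_j = \{a_i : (a_i,o_j) \in M\}$. Property~6.1 makes the $\mathcal{A}_j$ into a partition of $\mathcal{A}$, so
\[
\sum_{a_i \in \mathcal{A}} (1+\beta)^{n-i}\, a_i \;=\; \sum_{o_j \in \mathcal{O}}\, \sum_{a_i \in \mathcal{A}_j} (1+\beta)^{n-i}\, a_i.
\]

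The main step is to bound each inner sum using Property~6.2. For every $a_i \in \mathcal{A}_j$, that property gives $a_i \leq o_j$, which lets me replace $a_i$ by $o_j$, and the strict inequality $i > j$, which yields $n-i \leq n-j-1$. Combining these with the fact that the indices $i$ appearing in $\mathcal{A}_j$ must be pairwise distinct (again by Property~6.1), the exponents $\{n-i : a_i \in \mathcal{A}_j\}$ form a distinct subset of $\{0,1,\ldots,n-j-1\}$. A standard geometric-series computation then gives
\[
\sum_{a_i \in \mathcal{A}_j} (1+\beta)^{n-i}\, a_i \;\leq\; o_j \sum_{k=0}^{n-j-1}(1+\beta)^k \;=\; o_j \cdot \frac{(1+\beta)^{n-j}-1}{\beta} \;<\; \frac{(1+\beta)^{n-j}}{\beta}\, o_j.
\]

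Summing over $o_j$ and using the trivial inequality $1/\beta \leq 1+1/\beta$ finishes the proof:
\[
\sum_{a_i \in \mathcal{A}} (1+\beta)^{n-i}\, a_i \;\leq\; \frac{1}{\beta} \sum_{o_j \in \mathcal{O}} (1+\beta)^{n-j}\, o_j \;\leq\; \left(1 + \frac{1}{\beta}\right) \sum_{o_j \in \mathcal{O}} (1+\beta)^{n-j}\, o_j.
\]

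The only substantive obstacle is the distinctness-of-exponents observation in the second step. Without it, a single $o_j$ could be hit by arbitrarily many large $(1+\beta)^{n-i}$ terms and no such bound would hold; with it, Property~6.2's strict $i>j$ forces all those exponents to be distinct and strictly below $n-j$, collapsing the bucket's contribution to the geometric tail dominated by $(1+\beta)^{n-j}/\beta$. Once that step is in place, the outer summation is routine.
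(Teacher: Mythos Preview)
Your proof is correct and follows essentially the same approach as the paper: partition $\mathcal{A}$ into the buckets $\mathcal{A}_j$, use Property~6.2 to bound each bucket by the geometric tail $\sum_{i=j+1}^n(1+\beta)^{n-i}o_j$, and sum. Your distinctness-of-indices justification is exactly what the paper uses implicitly when it passes from $\sum_{a_i\in\mathcal{A}_j}(1+\beta)^{n-i}a_i$ to $\sum_{i=j+1}^n(1+\beta)^{n-i}o_j$; you are simply more explicit about it, and you in fact obtain the slightly sharper constant $1/\beta$ before relaxing to $1+1/\beta$.
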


\begin{proof}
Denote by $\mathcal{A}_j=\{a_i:(a_i,o_j)\in M\}$ the subset of $\mathcal{A}$ elements matched with element $o_j\in\mathcal{O}$ in $M$.
By Property 6.1, it must be the case that $\sum_{a_i\in\mathcal{A}}(1+\beta)^{n-i}a_i=\sum_{o_j\in\mathcal{O}}\sum_{a_i\in\mathcal{A}_j}(1+\beta)^{n-i}a_i$.
Due to Property~6.2, we have that $\sum_{a_i\in\mathcal{A}_j}(1+\beta)^{n-i}a_i\leq\sum_{i=j+1}^n(1+\beta)^{n-i}\max_{a_i\in\mathcal{A}_j}\{a_i\}\leq \sum_{i=j+1}^n(1+\beta)^{n-i}o_j= (1+\frac{1}{\beta})(1+\beta)^{n-j}o_j$, for each $o_j\in\mathcal{O}$, where the last equality follows from a standard geometric series sum calculation.
We conclude that
$\sum_{a_i\in\mathcal{A}}(1+\beta)^{n-i}a_i \leq \left(1+\frac{1}{\beta}\right)\left(\sum_{o_j\in\mathcal{O}}(1+\beta)^{n-j}o_j\right)$.
\end{proof}

\section{Two Basic Algorithms}
\label{Section:Basic_Algorithms}

This section investigates the two greedy \emph{non-interfering} and $\emph{non-idling}$ algorithms that have been proposed for special cases and variants of our problem.
We show that the non-interfering algorithm achieves a constant approximation ratio for instances with $\beta>n+1$, but is $\Omega((1+\beta)^n)$-approximate for general instances.
Next, we argue that the non-idling algorithm attains a constant factor approximation ratio for instances with $\beta\leq1/n$, but is $\Omega((1+\beta)^n)$-approximate for arbitrary instances.
Finally, we prove that returning the best of the two schedules computes a 2-approximate solution for instances with two distinct release times.

\subsection{Non-Interfering Algorithm}

Given a feasible schedule $\mathcal{S}$ for an instance $\mathcal{J}$ of the problem and two jobs $i,j\in\mathcal{J}$, we say that \emph{job $i$ interferes with job $j$} time $t$ in $\mathcal{S}$ if $s_i=t$, $\alpha_i>\alpha_j$ and $t<r_j< (1+\beta)t+\alpha_i$,
i.e.\ the situation where a longer job $i$ begins at a time $t$ before the release time $r_j$ a shorter job $j$ in $\mathcal{S}$ and $i$ completes after $r_j$, which can be avoided with an idle period during $[t,r_j)$. 
Clearly, jobs $i$ and $j$ have start times $s_i<s_j$ in $\mathcal{S}$.
In such a case, we say that $i$ is an \emph{interfering job} in $\mathcal{S}$.
Algorithm~\ref{Algorithm:Non_Interfering} constructs a schedule without interfering jobs.

\begin{algo}[Non-Interfering]
\label{Algorithm:Non_Interfering}
At each time $t$ that the machine becomes available, schedule a pending job $i=\arg\min_{k\in\mathcal{P}(t)}\{\alpha_k\}$ with minimal fixed processing time, unless this job is interfering, i.e.\ there exists a job $j$ such that  $\alpha_j<\alpha_i$ and $t<r_j<(1+\beta)t+\alpha_i$.
In this case, introduce an idle period during $[t,r_j)$ and proceed with time $t=r_j$. 
\end{algo}

Next, we proceed with Lemma~\ref{Lemma:Reduced_Instance} and  Observation~\ref{Observation:Order_To_Schedule}, which simplify the proof of Lemma~\ref{Lemma:Fixed_Processing_Time_Load_Bound} (as we do not need to account for gaps).
Starting from an instance $\mathcal{J}$, Lemma~\ref{Lemma:Reduced_Instance}  defines another instance $\widetilde{\mathcal{J}}$ such that the non-interfering schedules execute the jobs in the same order in the two instances and the non-interfering schedule for $\widetilde{\mathcal{J}}$ does not contain gaps.
Observation~\ref{Observation:Order_To_Schedule} shows that the job order uniquely characterizes an optimal schedule.

\begin{lemma}
\label{Lemma:Reduced_Instance}
Consider an arbitrary instance $\mathcal{J}=\{1,\ldots,n\}$ for which the non-interfering algorithm produces a schedule $\mathcal{S}$ with gaps.
Number the jobs in increasing order $s_1<\ldots<s_n$ of their start times in $\mathcal{S}$.
Starting from $\mathcal{J}$, construct a different instance $\widetilde{\mathcal{J}}$ with the same number of jobs, i.e.\ $|\mathcal{J}|=|\widetilde{\mathcal{J}}|$.
Each job $k\in\widetilde{\mathcal{J}}$ has fixed processing time $\tilde{\alpha}_k=\alpha_k$ and release time $\tilde{r}_k=\min\{r_k,\sum_{i=1}^{k-1}(1+\beta)^{(k-1)-i}\alpha_i\}$, where $\alpha_k$ and $r_k$ are the original parameters of $\mathcal{J}$.
The non-interfering algorithm executes the jobs in the same order in $\mathcal{J}$ and $\widetilde{\mathcal{J}}$, and produces a schedule without gaps, i.e.\ $q_i=0$, for each $i\in\{1,\ldots,n\}$, for $\widetilde{\mathcal{J}}$.
\end{lemma}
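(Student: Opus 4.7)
The plan is to prove the lemma by induction on $k$, the induction hypothesis being that the Non-Interfering algorithm applied to $\widetilde{\mathcal{J}}$ schedules jobs $1,\ldots,k$ in this order, contiguously from time $0$, so that job $\ell$ completes at $\tilde C_\ell := \sum_{i=1}^{\ell}(1+\beta)^{\ell-i}\alpha_i$ for each $\ell \le k$. The base case $k=0$ is vacuous. For the inductive step I have to show that at the time $t=\tilde C_{k-1}$ when the machine becomes available in $\widetilde{\mathcal{J}}$, the algorithm schedules precisely job $k$ with no preceding idle period.

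Before verifying this I would record three facts that drive the whole argument. (a) By Lemma~\ref{Lemma:Delay_Propagation} applied to $\mathcal{S}$, which may contain positive gaps, $C_{k-1}\ge \tilde C_{k-1}$ and hence $s_k\ge \tilde C_{k-1}$. (b) The sequence $\tilde C_\ell$ is strictly increasing in $\ell$, since every $\alpha_i>0$ and $\tilde C_\ell = (1+\beta)\tilde C_{\ell-1}+\alpha_\ell$. (c) In the original schedule $\mathcal{S}$, job $k$ is the minimum-$\alpha$ pending job at time $s_k$ and is non-interfering there, because Non-Interfering actually schedules $k$ at $s_k$. I then check the three algorithmic conditions at $t=\tilde C_{k-1}$ in $\widetilde{\mathcal{J}}$. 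For (i), job $k$ is pending because $\tilde r_k = \min\{r_k,\tilde C_{k-1}\}\le \tilde C_{k-1}$ by construction. For (ii), if some $j>k$ with $\alpha_j<\alpha_k$ were pending, then $\tilde r_j\le\tilde C_{k-1}$ gives either $r_j\le\tilde C_{k-1}$ or $\tilde C_{j-1}\le\tilde C_{k-1}$; the latter contradicts fact (b), and the former yields $r_j\le\tilde C_{k-1}\le C_{k-1}\le s_k$, so $j$ is pending at $s_k$ in $\mathcal{S}$ with $\alpha_j<\alpha_k$, contradicting fact (c). For (iii), suppose some $j$ satisfies $\alpha_j<\alpha_k$ and $\tilde C_{k-1}<\tilde r_j<(1+\beta)\tilde C_{k-1}+\alpha_k=\tilde C_k$: if $\tilde r_j=\tilde C_{j-1}$, fact (b) forces both $j>k$ and $j\le k$, which is impossible; if $\tilde r_j=r_j$, then $r_j<\tilde C_k\le(1+\beta)s_k+\alpha_k$, so non-interference of $k$ in $\mathcal{S}$ forces $r_j\le s_k$, and the reasoning from (ii) produces the same contradiction.

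This closes the induction, so Non-Interfering produces identical job orderings in $\mathcal{J}$ and $\widetilde{\mathcal{J}}$ with $q_i=0$ for every $i$. The main obstacle is step (iii), since both cases of the definition $\tilde r_j=\min\{r_j,\tilde C_{j-1}\}$ and both inequalities in the interference condition have to be played against one another. The decisive observation is the identity $(1+\beta)\tilde C_{k-1}+\alpha_k=\tilde C_k$, which pins the hypothetical interference window for $\tilde r_j$ to the ``geometric slot'' $(\tilde C_{k-1},\tilde C_k)$ and lets it collide either with the strict monotonicity of $\tilde C_\ell$ or with the non-interference of $k$ in the original schedule.
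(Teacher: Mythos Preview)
Your argument is correct. The paper takes a more static route: it exhibits the target schedule $\widetilde{\mathcal{S}}$ (same order as $\mathcal S$, no gaps), checks that $\tilde s_k=\tilde C_{k-1}\ge\tilde r_k$ so the new release dates are respected, and then, for every pair $k<\ell$ with $\alpha_k>\alpha_\ell$, uses non-interference of $k$ in $\mathcal S$ to get $C_k\le r_\ell$ and hence $\tilde C_k\le\tilde r_\ell$, certifying that $\widetilde{\mathcal{S}}$ is a feasible non-interfering schedule for $\widetilde{\mathcal J}$. Your approach is operational: you simulate the algorithm on $\widetilde{\mathcal J}$ and verify at each step the three predicates (job $k$ pending, $k$ has minimum $\alpha$ among pending jobs, $k$ not interfering) at time $\tilde C_{k-1}$. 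The two proofs rest on the same underlying inequalities, chiefly $\tilde C_{k}\le C_{k}$ and the transfer of non-interference from $\mathcal S$, but yours actually establishes the literal claim of the lemma that the \emph{algorithm} reproduces the order $1,\ldots,n$ on $\widetilde{\mathcal J}$; the paper's proof only shows $\widetilde{\mathcal{S}}$ is feasible and has the non-interfering property, leaving the ``same algorithm, same order'' conclusion implicit. One small point worth making explicit in your write-up: in step~(iii) the candidate $j$ necessarily satisfies $j>k$, since any $j\le k-1$ has $\tilde r_j\le\tilde C_{j-1}<\tilde C_{k-1}$ by your fact~(b) and therefore cannot meet the lower interference bound; you rely on $j>k$ tacitly when you invoke ``the reasoning from~(ii)'' in Case~B.
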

\begin{proof}
Starting from $\mathcal{S}$, the new problem instance $\widetilde{\mathcal{J}}$ is constructed so that $|\widetilde{\mathcal{J}}|=|\mathcal{J}|$, by rounding release times down.
In particular, we set a new release time $\tilde{r}_k=\min\{r_k,\sum_{i=1}^{k-1}(1+\beta)^{(k-1)-i}\alpha_i\}$ and fixed processing time $\tilde{\alpha}_k=\alpha_k$, for each $k\in\widetilde{\mathcal{J}}$.
Consider the schedule $\widetilde{\mathcal{S}}$ for $\widetilde{\mathcal{J}}$ obtained by executing the jobs in the same order with $\mathcal{S}$, but without gaps and denote the makespan of $\widetilde{\mathcal{S}}$ by $\widetilde{T}$. 
By construction, job $k\in\widetilde{\mathcal{J}}$ has start time $\tilde{s}_k=\sum_{i=1}^{k-1}(1+\beta)^{(k-1)-i}\tilde{\alpha}_i\geq \tilde{r}_k$, i.e.\ the new release times are not violated, in $\widetilde{\mathcal{S}}$.
Next, consider any pair $k,\ell\in\widetilde{\mathcal{J}}$ of jobs such that $k<\ell$ and $\tilde{\alpha}_k>\tilde{\alpha}_{\ell}$.
Since $\mathcal{S}$ is non-interfering, we have that $C_k<r_{\ell}$, which implies that $\sum_{i=1}^{k-1}(1+\beta)^{(k-1)-i}\alpha_i\leq r_{\ell}$.
Given that $k<\ell$, we conclude that $\tilde{C}_k\leq\tilde{r}_{\ell}$, i.e.\ $\widetilde{\mathcal{S}}$ is non-interfering for $\widetilde{\mathcal{J}}$.
\end{proof}

\begin{observation}
\label{Observation:Order_To_Schedule}
Consider an arbitrary order $\gamma(\cdot)$ of the jobs, where $\gamma(k)\in\mathcal{J}$ is the job in the $k$-th position of the order, for $k\in\{1,\ldots,n\}$.
Suppose that there exists an optimal schedule $\mathcal{S}^*$ executing the jobs according to $\gamma(\cdot)$, i.e.\ $s_{\gamma(1)}^*<\ldots<s_{\gamma(n)}^*$, where $s_i^*$ is the start time of job $i\in\mathcal{J}$ in $\mathcal{S}^*$. 
W.l.o.g.\ it holds that $s_{\gamma(i)}=\max\{r_{\gamma(i)},C_{\gamma(i-1)}\}$, for $i\in\{1,\ldots,n\}$, where $C_{\gamma(0)}=0$.
\end{observation}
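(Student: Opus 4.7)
The plan is to show that, starting from any optimal schedule $\mathcal{S}^*$ that executes the jobs in the order $\gamma(\cdot)$, we may construct another feasible schedule $\mathcal{S}'$ respecting the same order in which every job begins at the earliest possible time $\max\{r_{\gamma(i)}, C'_{\gamma(i-1)}\}$, and whose makespan does not exceed that of $\mathcal{S}^*$. Since $\mathcal{S}^*$ is optimal, $\mathcal{S}'$ will be optimal as well, which justifies the ``w.l.o.g.'' assertion.

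First I would define $\mathcal{S}'$ inductively: set $s'_{\gamma(1)}=r_{\gamma(1)}$, $C'_{\gamma(1)}=(1+\beta)s'_{\gamma(1)}+\alpha_{\gamma(1)}$, and for $i\geq 2$ put $s'_{\gamma(i)}=\max\{r_{\gamma(i)},C'_{\gamma(i-1)}\}$ and $C'_{\gamma(i)}=(1+\beta)s'_{\gamma(i)}+\alpha_{\gamma(i)}$. Feasibility of $\mathcal{S}'$ is immediate by construction: no job starts before its release time, no two jobs overlap, and the jobs are processed in the order $\gamma(\cdot)$.

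Next I would prove by induction on $i\in\{1,\ldots,n\}$ the pair of inequalities $s'_{\gamma(i)}\leq s^*_{\gamma(i)}$ and $C'_{\gamma(i)}\leq C^*_{\gamma(i)}$. For the base case, $s'_{\gamma(1)}=r_{\gamma(1)}\leq s^*_{\gamma(1)}$ holds by feasibility of $\mathcal{S}^*$, and since the completion time $C=(1+\beta)s+\alpha$ is monotonically increasing in $s$, it follows that $C'_{\gamma(1)}\leq C^*_{\gamma(1)}$. For the inductive step, feasibility of $\mathcal{S}^*$ gives $s^*_{\gamma(i)}\geq\max\{r_{\gamma(i)},C^*_{\gamma(i-1)}\}$, and the induction hypothesis yields
\begin{equation*}
s'_{\gamma(i)}=\max\{r_{\gamma(i)},C'_{\gamma(i-1)}\}\leq\max\{r_{\gamma(i)},C^*_{\gamma(i-1)}\}\leq s^*_{\gamma(i)},
\end{equation*}
and again monotonicity of $C$ in $s$ delivers $C'_{\gamma(i)}\leq C^*_{\gamma(i)}$.

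Applying the inequality at $i=n$ gives $C'_{\gamma(n)}\leq C^*_{\gamma(n)}$, so the makespan of $\mathcal{S}'$ is bounded by that of $\mathcal{S}^*$, hence $\mathcal{S}'$ is also optimal and satisfies the stated equality at every position. The argument is essentially an exchange step, and the only place requiring care is the monotonicity observation: because $\beta>0$ and the deterioration rate is uniform, a reduction of the start time $s_{\gamma(i)}$ by $\delta$ produces a reduction of $C_{\gamma(i)}$ by exactly $(1+\beta)\delta$, so shifting jobs to the left along $\gamma(\cdot)$ can only improve subsequent start times, and hence the makespan. I do not anticipate any real obstacle, as this is a standard left-shift argument adapted to the time-dependent setting.
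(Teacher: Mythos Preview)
Your argument is correct: the left-shift construction and the inductive comparison $s'_{\gamma(i)}\leq s^*_{\gamma(i)}$, $C'_{\gamma(i)}\leq C^*_{\gamma(i)}$ are exactly what is needed, and monotonicity of $C=(1+\beta)s+\alpha$ in $s$ closes each step. The paper itself records this statement as an Observation without supplying a proof, so there is nothing to compare against; your write-up is the standard justification one would expect and goes through without issue. (One minor remark: the monotonicity you invoke only needs $\beta>-1$, so the uniformity of $\beta$ across jobs plays no role here.)
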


Number the jobs in increasing order $s_1<\ldots<s_n$ of their start times in the schedule $\mathcal{S}$ produced by the non-interfering algorithm.
Job $k\in\mathcal{J}$ is executed in the $k$-th position of $\mathcal{S}$.
Next, consider an optimal schedule $\mathcal{S}^*$ and let $\gamma(k)\in\mathcal{J}$ be the job executed in the $k$-th position of $\mathcal{S}^*$, for $k\in\{1,\ldots,n\}$.
We say that $k\in\mathcal{J}$ is a \emph{critical job} if $C_k\leq C_{\gamma(k)}^*$, where $C_i^*$ is the completion time of job $i\in\mathcal{J}$ in $\mathcal{S}^*$. 
Lemma~\ref{Lemma:Fixed_Processing_Time_Load_Bound} upper bounds the fixed processing times of jobs executed after the last critical job in $\mathcal{S}$ based on pseudomatchings.

\begin{lemma}
\label{Lemma:Fixed_Processing_Time_Load_Bound}
Consider a non-interfering schedule $\mathcal{S}$ and let $\ell=\max\{k:C_k\leq C_{\gamma(k)}^*, k\in\mathcal{J}\}$ be the last critical job.
For each $k\in\{\ell+1,\ldots,n\}$, it holds that $\sum_{i=\ell+1}^k\alpha_i\leq2\left[\sum_{j=1}^k\alpha_{\gamma(j)}\right]$.
\end{lemma}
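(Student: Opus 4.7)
My plan is to invoke Lemma~\ref{Lem:S_Pseudomachings} with $\rho=2$ on the indexed sets $\mathcal{A}=\{\alpha_{\ell+1},\ldots,\alpha_k\}$ and $\mathcal{O}=\{\alpha_{\gamma(1)},\ldots,\alpha_{\gamma(k)}\}$ (padded if needed so the cardinalities match), by constructing an explicit $2$-pseudomatching $M$ in their bounding graph. The whole proof then reduces to exhibiting $M$ and verifying Properties~4.1--4.3.

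First I would apply Lemma~\ref{Lemma:Reduced_Instance} to reduce to the case where $\mathcal{S}$ contains no gaps, so that $s_i=C_{i-1}$ for every $i\geq 2$ and the job orders in both $\mathcal{S}$ and $\mathcal{S}^*$ are preserved; by Observation~\ref{Observation:Order_To_Schedule} this retains the values $\ell$ and the critical-job structure. Writing $A_k=\{1,\ldots,k\}$ and $O_k=\{\gamma(1),\ldots,\gamma(k)\}$, I would partition $\{\ell+1,\ldots,k\}$ into jobs that lie in $O_k$ and those that do not. For each $i\in\{\ell+1,\ldots,k\}\cap O_k$ I would include a ``self-match'' edge into $M$ pairing $\alpha_i$ with $\alpha_{\gamma(j)}$, where $j\leq k$ is the unique index with $\gamma(j)=i$; this trivially satisfies $\alpha_i\leq\alpha_{\gamma(j)}$. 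For each $i\in\{\ell+1,\ldots,k\}\setminus O_k\subseteq A_k\setminus O_k$ I would locate an index $j\leq k$ with $\gamma(j)\in O_k\setminus A_k$ such that $\gamma(j)$ is pending in $\mathcal{S}$ at time $s_i$; the non-interfering rule, which picks the pending job of minimum fixed processing time, would then force $\alpha_i\leq\alpha_{\gamma(j)}$, and the resulting ``cross-match'' edge would be added to $M$. Since $|A_k\setminus O_k|=|O_k\setminus A_k|$, there is room to make the cross-matching injective on $O_k\setminus A_k$.

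I expect the main obstacle to be verifying, for each $i\in\{\ell+1,\ldots,k\}\setminus O_k$, that a suitable $\gamma(j)\in O_k\setminus A_k$ is actually available in $\mathcal{P}(s_i)$. The condition $\gamma(j)\notin A_k$ automatically yields $\gamma(j)\notin\{1,\ldots,i-1\}$, so the difficulty reduces to a release-time comparison $r_{\gamma(j)}\leq s_i$, which I plan to extract from the non-critical inequality $C_i>C^*_{\gamma(i)}\geq s^*_{\gamma(j)}\geq r_{\gamma(j)}$ together with the ordering of the $\gamma(j)$ in $\mathcal{S}^*$, likely via an augmenting-path or Hall-type pairing argument that simultaneously certifies injectivity on the $O_k\setminus A_k$ side. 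Each $\alpha_{\gamma(j)}$ would then appear as an endpoint of at most one self-match and at most one cross-match, giving total multiplicity at most two, so Property~4.2 holds with $\rho=2$ and Lemma~\ref{Lem:S_Pseudomachings} delivers the claimed bound $\sum_{i=\ell+1}^k\alpha_i\leq 2\sum_{j=1}^k\alpha_{\gamma(j)}$.
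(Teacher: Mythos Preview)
Your overall structure coincides with the paper's: reduce to a gap-free non-interfering schedule via Lemma~\ref{Lemma:Reduced_Instance}, then build a $2$-pseudomatching out of self-matches (for $i\in\{\ell+1,\ldots,k\}\cap O_k$) and injective cross-matches (for $i\in\{\ell+1,\ldots,k\}\setminus O_k$) into the $\mathcal{O}$ side. The paper builds this matching by induction on $k$, which is one natural implementation of your Hall-type pairing and, in particular, automatically guarantees that the cross-match partner $\gamma(j)$ of job $i$ satisfies $j\leq i$.

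The substantive gap is in how you certify $\alpha_i\leq\alpha_{\gamma(j)}$ for a cross-match. You want to show $\gamma(j)\in\mathcal{P}(s_i)$ and then invoke the minimum-$\alpha$ selection rule, deriving $r_{\gamma(j)}\leq s_i$ from the chain $C_i>C^*_{\gamma(i)}\geq s^*_{\gamma(j)}\geq r_{\gamma(j)}$. But that chain only yields $r_{\gamma(j)}<C_i$, not $r_{\gamma(j)}\leq s_i$; the stronger inequality can simply fail, so you cannot conclude that $\gamma(j)$ is pending when $i$ starts. The paper avoids this entirely by arguing contrapositively: assume $\alpha_i>\alpha_{\gamma(j)}$. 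Since $i<\gamma(j)$ in the $\mathcal{S}$ order and $\mathcal{S}$ is non-interfering, one must have $C_i\leq r_{\gamma(j)}$ --- indeed, if $r_{\gamma(j)}\leq s_i$ the algorithm would have picked $\gamma(j)$ instead of $i$, and if $s_i<r_{\gamma(j)}<C_i$ then $i$ would interfere with $\gamma(j)$. Combined with $j\leq i$ this gives $C_i\leq r_{\gamma(j)}\leq s^*_{\gamma(j)}\leq s^*_{\gamma(i)}<C^*_{\gamma(i)}$, contradicting that $i>\ell$ is non-critical. So the non-interfering property delivers the inequality directly; your ``pending at $s_i$'' route is neither derivable from your chain nor needed.
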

\begin{proof}
To prove the lemma, we may assume w.l.o.g.\ that $\mathcal{S}$ does not contain gaps.
Otherwise, if $\mathcal{S}$ contains gaps, starting from the original instance $\mathcal{J}$, we may consider the modified instance $\widetilde{\mathcal{J}}$ obtained according to Lemma~\ref{Lemma:Reduced_Instance}.
Using Observations~\ref{Observation:Order_To_Schedule} and the orders of the jobs in the non-interfering schedule $\mathcal{S}$ and in an optimal schedule $\mathcal{S}^*$ for $\mathcal{J}$, we can obtain two feasible schedules $\widetilde{\mathcal{S}}$ and $\widetilde{\mathcal{S}}^*$, respectively, for $\widetilde{\mathcal{J}}$. 
By Lemma~\ref{Lemma:Reduced_Instance}, $\widetilde{\mathcal{S}}$ is the schedule produced by the non-interfering algorithm for $\widetilde{\mathcal{J}}$ and does not contain any gaps.
Therefore, proving the lemma with $\widetilde{\mathcal{S}}$ and $\widetilde{\mathcal{S}}^*$ implies that the lemma holds for the original instance $\mathcal{J}$.
We note that the optimal schedule may change for $\widetilde{\mathcal{J}}$, but this does not affect our argument since we compare a non-interfering schedule without gaps with an arbitrary feasible schedule.
In the remainder of the proof, assume that $q_i=0$, for each $i\in\mathcal{J}$, in $\mathcal{S}$.


For each $k\in\{1,\ldots,n\}$, define the sets $\mathcal{A}_k=\{1,\ldots,k\}$ and $\mathcal{O}_k=\{\gamma(1),\ldots,\gamma(k)\}$ of jobs executed in the first $k$ positions of $\mathcal{S}$ and $\mathcal{S}^*$, respectively.
Further, for each $k>\ell$, denote by $\mathcal{A}_k^-=\{\ell+1,\ldots,k\}$ the subset of the $\mathcal{A}_k$ jobs executed after the last critical job $\ell$ in $\mathcal{S}$.
For simplicity of the presentation, we denote a job in $\mathcal{A}_k$ by its actual index $i$ and a job in $\mathcal{O}_k$ by $\gamma(j)$ (i.e.\ using the $\gamma(\cdot)$ notation), for $i,j\in\{1,\ldots,k\}$.
Deriving the lemma is equivalent to showing that $\sum_{i\in\mathcal{A}_k^-}\alpha_i\leq2\left[\sum_{\gamma(j)\in\mathcal{O}_k}\alpha_{\gamma(j)}\right]$.

For each $k\in\{1,\ldots,n\}$, consider the bounding (complete bipartite) graph $G_k=(\mathcal{A}_k\cup \mathcal{O}_k,\mathcal{A}_k\times\mathcal{O}_k)$ with $2k$ nodes: a node for each of the $k$ jobs in $\mathcal{A}_k$ and a node for each of the $k$ jobs in $\mathcal{O}_k$.
Note that, if there exist $i\in\mathcal{A}_k$ and $\gamma(j)\in\mathcal{O}_k$ such that $i=\gamma(j)$, then we introduce two nodes for job $i$, i.e.\ a node in each side of the bipartition. 
The graph contains all possible $k^2$ edges with one endpoint in $\mathcal{A}_k$ and the other in $\mathcal{O}_k$.
Using standard terminology, a matching in $G_k$ is a subset $M_k\subseteq\mathcal{A}_k\times\mathcal{O}_k$ of edges without a common endpoint.
If $(i,\gamma(j))\in M_k$, then we say that the nodes $i\in\mathcal{A}_k$ and $\gamma(j)\in\mathcal{O}_k$ are \emph{matched} by $M_k$.
By relaxing the notion of a matching, we refer to a set $M_k$ of edges in $G_k$ as a $\rho$\emph{-pseudomatching} if every node $i\in\mathcal{A}_k$ appears at most once as an endpoint of an edge in $M_k$ and every node $\gamma(j)\in\mathcal{O}_k$ appears at most $\rho$ times as an edge point of an edge in $M_k$, where $\rho\in\mathbb{Z}^+$ is a positive integer.
Let $M_k(\mathcal{A}_k)=\{i:(i,\gamma(j))\in M_k, i\in\mathcal{A}_k, \gamma(j)\in\mathcal{O}_k\}$ be the subset of the $\mathcal{A}_k$ nodes appearing as an endpoint of an edge in $M_k$.
Next, for each $k\in\{\ell+1,\ldots,n\}$, we show the existence of a 2-pseudomatching $M_k$ in $G_k$ with the following properties:
\begin{enumerate}
    \item $M_k(\mathcal{A}_k)=\mathcal{A}_k^-$, i.e.\ each job $i\in\mathcal{A}_k^-$ appears exactly once as the endpoint of an edge in $M_k$ and no other $\mathcal{A}_k$ node is matched.
    \item For every job $i\in\mathcal{A}_k^-$ such that there exists a job $\gamma(j)\in\mathcal{O}_k$ with $i=\gamma(j)$, we have that $(i,\gamma(j))\in M_k$. That is, every job $i$ which is executed in the $\{\ell+1,\ldots,k\}$ positions of $\mathcal{S}$ and the first $k$ positions of $\mathcal{S}^*$ must be matched with itself in $M_k$.
    \item Every job $i\in\mathcal{A}_k^-$ which is not executed in the first $k$ positions of $\mathcal{S}^*$, i.e.\ $\gamma(j)=i$ for some $\gamma(j)\notin \mathcal{O}_k$, must be matched with a job  $\gamma(j)\in\mathcal{O}_k\setminus\mathcal{A}_k$ in $M_k$.
    \item Each job $\gamma(j)\in\mathcal{O}_k\setminus\mathcal{A}_k$ is matched with at most one job in $\mathcal{A}_k\setminus\mathcal{O}_k$.
    \item If $(i,\gamma(j))\in M_k$, for some pair of jobs $i\in\mathcal{A}_k$ and $\gamma(j)\in\mathcal{O}_k$, then $\alpha_i\leq\alpha_{\gamma(j)}$.
\end{enumerate}

We refer to a pseudomatching satisfying the above properties as a \emph{2-pseudomatching}.
If such a pseudomatching exists, then it clearly holds that $\sum_{i\in\mathcal{A}_k^-}\alpha_i\leq2\left[\sum_{\gamma(j)\in\mathcal{O}_k}\alpha_{\gamma(j)}\right]$: each $\mathcal{A}_k^-$ job is matched exactly once with an $\mathcal{O}_k$ job and each $\mathcal{O}_k$ job is matched at most two $\mathcal{A}_k^-$ jobs.
We will show its existence by induction on $k\in\{\ell+1,\ldots,n\}$.



For the induction basis, consider the case $k=\ell+1$.
If $\ell+1\in\mathcal{O}_k$, then $\gamma(j)=\ell+1$, for some $j\in\{1,\ldots,\ell+1\}$.
Clearly, $\mathcal{M}_{\ell+1}=\{(\ell+1,\gamma(j))\}$ is a 2-pseudomatching, given that $\alpha_{\ell+1}=\alpha_{\gamma(j)}$.
If $\ell+1\notin\mathcal{O}_k$, then, by using a simple pigeonhole principle argument\footnote{A similar, but more elaborate, pigeonhole argument is rigorously presented in the proof of Theorem~\ref{Theorem:ECTF}.}, there exists a job $\gamma(j)>\ell+1$ such that $j\in\{1,\ldots,\ell+1\}$.
Since $\ell+1$ is not critical, we have that $\alpha_{\ell+1}\leq\alpha_{\gamma(j)}$.
Otherwise, if $\alpha_{\ell+1}>\alpha_{\gamma(j)}$, by the way the non-interfering algorithm works and the fact that $\ell+1<\gamma(j)$, we would have $C_{\ell+1}\leq r_{\gamma(j)}<C_{\gamma(\ell+1)}^*$, which would contradict that $\ell+1$ is not critical.
We conclude that $M_{\ell+1}=\{(\ell+1,\gamma(j))\}$ is a 2-pseudomatching.

For the induction step, assume that $G_k$ admits a 2-pseudomatching $M_k$.
We will convert $M_k$ into a 2-pseudomatching $M_{k+1}$ for $G_{k+1}$.
This update involves the following steps:
\begin{itemize}
    \item Initially, we adapt $M_k$ based on the job $\gamma(k+1)$ executed in the $(k+1)$-th position of $\mathcal{S}^*$ to obtain an intermediate 2-pseudomatching $\widetilde{M}_{k+1}$. 
    Suppose that $\gamma(k+1)=i$.  
    If $i\notin\mathcal{A}_k^-$, then we set $\widetilde{M}_{k+1}=M_k$.
    Otherwise, if $i\in\mathcal{A}_k^-$, i.e.\  $\mathcal{S}$ completes job $i$ in the positions $\{\ell+1,\ldots,k\}$, then we need to update $M_k$ so as to satisfy Properties 1-2 in the resulting 2-pseudomatching $M_{k+1}$.
    Since $\gamma(k+1)\notin\mathcal{O}_k$, by the induction hypothesis, job $i$ is matched with exactly one job $\gamma(j)\in\mathcal{O}_k\setminus\mathcal{A}_k$ in $M_k$.
    We set $\widetilde{M}_{k+1}=(M_k\cup\{(i,\gamma(k+1))\})\setminus\{(i,\gamma(j))\}$, that is we remove $(i,\gamma(j))$ from $M_k$ and add  $(i,\gamma(k+1))$ to obtain $\widetilde{M}_{k+1}$.
    In this way, $i\in\mathcal{A}_k^-$ is now matched with job $\gamma(k+1)$, i.e.\ itself, in the $\mathcal{O}_k$ side of $G_k$.
    
    \item Next, we adapt $\widetilde{M}_{k+1}$ so as to have job $k+1$ matched with some $\mathcal{O}_{k+1}$ job in $M_{k+1}$. We distinguish two cases based on whether $(k+1)\in\mathcal{O}_{k+1}$, or $(k+1)\notin\mathcal{O}_{k+1}$.
    In the former case, there exists $j\in\{1,\ldots,k+1\}$ s.t.\ $\gamma(j)=k+1$. Based on Property 2, we set $M_{k+1}=\widetilde{M}_{k+1}\cup\{(k+1,\gamma(j))\}$, i.e.\ job $(k+1)\in\mathcal{A}_{k+1}^-$ is matched with itself in the $\mathcal{O}_{k+1}$ side.
    In the latter case, it holds that $k+1\in\mathcal{A}_{k+1}\setminus\mathcal{O}_{k+1}$.
    Let $x=|\mathcal{A}_{k+1}\setminus\mathcal{O}_{k+1}|$ be the number of jobs executed in the first $k+1$ positions of $\mathcal{S}$ and after the first $k+1$ positions of $\mathcal{S}^*$.
    A simple set theoretic argument implies that $|\mathcal{O}_{k+1}\setminus\mathcal{A}_{k+1}|=x$. 
    By the induction hypothesis and Properties 3-4, we conclude that each $\mathcal{O}_k\setminus\mathcal{A}_k$ is matched with at most one $\mathcal{A}_k\setminus\mathcal{O}_k$ job.
    Therefore, given that $i\in\mathcal{A}_{k+1}\setminus\mathcal{O}_{k+1}$, there exists a job $\gamma(j)\in\mathcal{O}_{k+1}\setminus\mathcal{A}_{k+1}$ which is not matched with any $\mathcal{A}_k$ job in $\widetilde{M}_{k+1}$.
    We set $M_{k+1}=\widetilde{M}_{k+1}\cup\{(i,\gamma(j))\}$ and guarantee that Properties 1-4 are satisfied.
    Next, we claim that $\alpha_{k+1}\leq\alpha_{\gamma(j)}$. Otherwise, if $\alpha_{k+1}>\alpha_{\gamma(j)}$, by the way the non-interfering algorithm works and the fact that $k+1<\gamma(j)$ ($\gamma(j)\in\mathcal{O}_{k+1}\setminus\mathcal{A}_{k+1}$), we would have $C_{k+1}\leq r_{\gamma(j)}<C_{\gamma(k+1)}^*$, which would contradict that $k+1$ is not critical.
\end{itemize}
\end{proof}

Lemma~\ref{Lemma:Release_Times_Lower_Bound} lower bounds the optimal makespan using release times. 

\begin{lemma}
\label{Lemma:Release_Times_Lower_Bound}
Assume that the jobs are numbered so that $r_1\leq\ldots\leq r_n$. Any optimal schedule $\mathcal{S}^*$ has makespan $T^*\geq \sum_{i=1}^n\beta^{n-i}r_i$.
\end{lemma}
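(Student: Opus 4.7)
The plan is to fix the optimal schedule $\mathcal{S}^*$, renumber its jobs in increasing order $s_1^*\leq\ldots\leq s_n^*$ of start times (so the index $k$ now refers to the $k$-th executed job, with original index $\pi(k)$), and derive a recursive lower bound $C_k^*\geq r_k+\beta C_{k-1}^*$, whose unrolling yields the claimed inequality. The two ingredients are a pigeonhole argument providing a release-time lower bound for $s_k^*$ and a simple convex combination trick to combine it with the non-preemptive constraint $s_k^*\geq C_{k-1}^*$.

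First I would establish the \emph{release-time bound} $s_k^*\geq r_k$ for every $k\in\{1,\ldots,n\}$. Among positions $\{1,\ldots,k\}$ there are $k$ jobs, but only $k-1$ of the original indices are in $\{1,\ldots,k-1\}$, so by the pigeonhole principle some position $j\leq k$ is occupied by a job with original index $\pi(j)\geq k$. Since release times are sorted non-decreasingly, this gives $s_k^*\geq s_j^*\geq r_{\pi(j)}\geq r_k$.

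Next, I would combine this with $s_k^*\geq C_{k-1}^*$ (using the convention $C_0^*=0$). Any real number exceeding both $r_k$ and $C_{k-1}^*$ exceeds every convex combination of them; taking weights $\frac{1}{1+\beta}$ and $\frac{\beta}{1+\beta}$ gives $s_k^*\geq\frac{r_k+\beta C_{k-1}^*}{1+\beta}$. Multiplying by $(1+\beta)$ and adding the nonnegative quantity $\alpha_{\pi(k)}$ produces the key recurrence
\begin{equation*}
C_k^* \;=\; (1+\beta)s_k^* + \alpha_{\pi(k)} \;\geq\; r_k + \beta\,C_{k-1}^*.
\end{equation*}

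Finally, I would unroll the recurrence from $k=n$ down to $k=1$, obtaining $T^*=C_n^*\geq \sum_{i=1}^n \beta^{n-i}r_i + \beta^n C_0^* = \sum_{i=1}^n\beta^{n-i}r_i$, as required. I expect the main subtlety to be the pigeonhole step: one must be careful that the renumbering by start times, which differs from the assumed ordering by release times, still produces $s_k^*\geq r_k$ in the renumbered sense, and this is exactly what the pigeonhole argument delivers. The convex-combination trick is what converts the two ``max'' lower bounds into a clean linear recurrence whose unrolling matches the geometric expression $\sum_{i=1}^n \beta^{n-i} r_i$.
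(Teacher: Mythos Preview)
Your proposal is correct and follows essentially the same inductive recurrence $C_k^*\geq r_k+\beta C_{k-1}^*$ as the paper, with the same unrolling to $\sum_{i=1}^n\beta^{n-i}r_i$. The one substantive addition is your explicit pigeonhole argument establishing $s_k^*\geq r_k$ when positions are ordered by start time; the paper's proof writes $s_{k+1}\geq\max\{r_{k+1},C_k\}$ without distinguishing the release-time ordering (needed for $s_{k+1}\geq r_{k+1}$) from the start-time ordering (needed for $s_{k+1}\geq C_k$), and your pigeonhole step is precisely what reconciles the two.
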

\begin{proof}
Denote by $C_i$ and $T$ the completion time of job $i\in\mathcal{J}$ and makespan, respectively, in schedule $S$.
We prove by induction that $C_k\geq\sum_{i=1}^k\beta^{k-i}r_i$, for each $k\in\{1,\ldots,n\}$.
For $k=1$, it clearly holds that $C_1\geq \beta s_1^*\geq r_1$.
Suppose that lemma is true for some $k\in\{1,\ldots,n-1\}$.
By the fact that $s_{k+1}\geq\max\{r_{k+1},C_k\}$ and the induction hypothesis:
\begin{equation*}
C_{k+1}= (1+\beta)s_{k+1}+\alpha_{k+1}\geq r_{k+1}+\beta C_k\geq r_{k+1}+\beta\left[\sum_{i=1}^k\beta^{k-i}r_i\right] 
= \sum_{i=1}^{k+1}\beta^{(k+1)-i}r_i.
\end{equation*}
\end{proof}

Theorem~\ref{Theorem:Non_Interfering} presents bounds on the approximation ratio of the non-interfering algorithm.

\begin{theorem}
\label{Theorem:Non_Interfering}
Algorithm~\ref{Algorithm:Non_Interfering} is $(3+e)$-approximate for instances with $\beta\geq n+1$ and $\Omega((1+\beta)^n)$-approximate for general instances.
\end{theorem}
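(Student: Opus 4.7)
The plan splits the theorem into an upper and a lower bound. For the upper bound under $\beta \geq n+1$, I decompose the non-interfering makespan $T$ via Lemma~\ref{Lemma:Delay_Propagation} into a gap contribution $G = \sum_{i=1}^n (1+\beta)^{n-i+1}q_i$ and a fixed processing time contribution $A = \sum_{i=1}^n (1+\beta)^{n-i}\alpha_i$, and bound each piece by a constant multiple of $T^*$.

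For $A$, I first apply Lemma~\ref{Lemma:Reduced_Instance} to reduce to an instance $\widetilde{\mathcal{J}}$ where the non-interfering schedule has no gaps and $T^*_{\widetilde{\mathcal{J}}} \leq T^*$, which is harmless because $A$ depends only on the algorithm's ordering and on the $\alpha$-values. Let $\ell$ be the last critical job. In the no-gap case, using $C_k = \sum_{i=1}^k (1+\beta)^{k-i}\alpha_i$ together with $C_\ell \leq C^*_{\gamma(\ell)}$ and the OPT recursion $T^* \geq (1+\beta)^{n-\ell}C^*_{\gamma(\ell)}$, I bound the critical prefix by
\begin{equation*}
\sum_{i=1}^{\ell}(1+\beta)^{n-i}\alpha_i = (1+\beta)^{n-\ell}C_\ell \leq T^*.
\end{equation*}
For the tail $\sum_{i=\ell+1}^n (1+\beta)^{n-i}\alpha_i$, I apply Lemma~\ref{Lemma:New_Makespan_Expression} restricted to $\{\ell+1,\ldots,n\}$, bound each inner prefix $\sum_{i=\ell+1}^{k-1}\alpha_i$ by $2\sum_{j=1}^{k-1}\alpha_{\gamma(j)} \leq 2T^*/(1+\beta)^{n-k+1}$ via Lemma~\ref{Lemma:Fixed_Processing_Time_Load_Bound} together with the OPT recursion, and sum the resulting geometric series, using $\beta \geq n+1$ to control $(1+1/\beta)^n \leq e$ and collapse to a constant times $T^*$.

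For the gap contribution $G$, I use that $q_i \leq r_i$ whenever $q_i>0$ (the algorithm delays only until a pending job is released) and combine this with Lemma~\ref{Lemma:Release_Times_Lower_Bound}. The weight ratio $(1+\beta)^{n-i+1}/\beta^{n-i} = (1+\beta)(1+1/\beta)^{n-i}$ is controlled by $e(1+\beta)$ for $\beta \geq n+1$; charging each gap-bearing job to its release-time slot in the lower bound of Lemma~\ref{Lemma:Release_Times_Lower_Bound} yields $G \leq cT^*$ for a constant $c$. Combining the bounds on $A$ and $G$ and verifying the constants gives the $(3+e)T^*$ bound.

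For the $\Omega((1+\beta)^n)$ lower bound I construct a recursive family of adversarial instances. In the base case ($n=2$), I choose $\alpha_1 < \alpha_2$ with $r_1$ slightly below $\alpha_2$, so the non-interfering algorithm delays and processes $J_1$ then $J_2$ with makespan $\approx (1+\beta)^2 r_1$, while the optimal schedule processes $J_2$ first and $J_1$ second for makespan $\approx (1+\beta)\alpha_2$, giving ratio $\Omega(1+\beta)$. Extending recursively by inserting a new smaller job whose release time triggers another wait at the right moment multiplies the algorithm's makespan by $(1+\beta)$ while leaving the optimal makespan essentially unchanged, yielding $\Omega((1+\beta)^n)$ after $n$ levels. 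The main obstacle on the upper-bound side is obtaining the constant collapse of the geometric series: naive term-by-term bounds produce an $O(n)$ factor, so it is essential to cross-reference Lemma~\ref{Lemma:Fixed_Processing_Time_Load_Bound} against the geometrically weighted lower bounds $(1+\beta)^{n-k+1}C^*_{\gamma(k-1)} \leq T^*$ so that the exponential weights $(1+\beta)^{n-k}$ cancel against the prefix bounds and leave only the $(1+1/\beta)^n \leq e$ residual.
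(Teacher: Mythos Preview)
Your upper-bound argument has a genuine gap in the treatment of the fixed processing time tail. After applying Lemma~\ref{Lemma:New_Makespan_Expression} to $\{\ell+1,\dots,n\}$ you obtain
\[
\sum_{i=\ell+1}^n(1+\beta)^{n-i}\alpha_i=\sum_{i=\ell+1}^n\alpha_i+\sum_{k=\ell+2}^n\beta(1+\beta)^{n-k}\!\left(\sum_{i=\ell+1}^{k-1}\alpha_i\right),
\]
and then replace each inner prefix by $2T^*/(1+\beta)^{n-k+1}$. But this substitution makes the weights cancel exactly: each term becomes $\frac{2\beta}{1+\beta}T^*$, and summing over $k=\ell+2,\dots,n$ gives $\Theta(n)T^*$, not a geometric series. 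There is no $(1+1/\beta)^n$ residual here; the $(1+1/\beta)^n\le e$ inequality is relevant only to the gap term, where the upper-bound weight $(1+\beta)^{n-i+1}$ is compared against the $\beta^{n-i}$ weight of Lemma~\ref{Lemma:Release_Times_Lower_Bound}. You correctly flag that ``naive term-by-term bounds produce an $O(n)$ factor,'' but the fix you propose is precisely that naive bound.

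The paper avoids this by stopping one step earlier: after Lemma~\ref{Lemma:Fixed_Processing_Time_Load_Bound} gives $\sum_{i=\ell+1}^{k-1}\alpha_i\le 2\sum_{j=1}^{k-1}\alpha_{\gamma(j)}$, it does \emph{not} collapse each $\sum_{j=1}^{k-1}\alpha_{\gamma(j)}$ to a scalar. Instead it keeps the whole double sum
\[
2\left[\sum_{j=1}^n\alpha_{\gamma(j)}+\sum_{k=\ell+2}^n\beta(1+\beta)^{n-k}\!\left(\sum_{j=1}^{k-1}\alpha_{\gamma(j)}\right)\right]
\]
and recognises it (via Lemma~\ref{Lemma:New_Makespan_Expression} applied to $\mathcal{S}^*$) as at most $2\sum_{j=1}^n(1+\beta)^{n-j}\alpha_{\gamma(j)}\le 2T^*$. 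Combined with $(1+\beta)^{n-\ell}C_{\gamma(\ell)}^*\le T^*$ and the gap bound $\le eT^*$, this is how the constant $3+e$ arises. Your decomposition $T=G+A$ with a separate no-gap reduction for $A$ is fine organisationally, but you must preserve the structured double sum to get a constant.

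On the lower bound: your recursive idea is in the right spirit and close to the paper's explicit instance, but the phrase ``leaving the optimal makespan essentially unchanged'' is inaccurate. Every additional job forces at least one more $(1+\beta)$ factor in \emph{any} schedule, so OPT also grows by $(1+\beta)$ per level; the point is that the algorithm's makespan grows by $(1+\beta)^2$ per level (one factor from the extra wait, one from the extra job), so the ratio gains a factor $(1+\beta)$.
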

\begin{proof}
Recall that the jobs are numbered in increasing order $s_1<\ldots\leq s_n$ of their start times in the schedule $\mathcal{S}$ produced by the non-interfering algorithm
and $\gamma(k)\in\mathcal{J}$ is the job executed in the $k$-th position of an optimal schedule $\mathcal{S}^*$, for $k\in\{1,\ldots,n\}$.
Let $\ell=\max\{k:C_k\leq C_{\gamma(k)}^*
\}$ be the last critical position.
By Lemma~\ref{Lemma:Delay_Propagation}, we have that $T = (1+\beta)^{n-\ell}C_{\ell}+\sum_{i=\ell+1}^n(1+\beta)^{n-i+1}q_i+\sum_{i=\ell+1}^n (1+\beta)^{n-i}\alpha_i$.
Using Lemma~\ref{Lemma:New_Makespan_Expression}, i.e.\ expanding the last sum of this expression with geometric series, we get that:
\begin{align}
\label{Equation:Non_Interfering_Makespan_Expression}
T & = (1+\beta)^{n-\ell}C_{\ell} + \sum_{i=\ell+1}^n(1+\beta)^{n-i+1}q_i
+\sum_{i=\ell+1}^n\alpha_i+\sum_{k=\ell+2}^n\beta(1+\beta)^{n-k}\left(\sum_{i=\ell+1}^{k-1} \alpha_i\right)
\end{align}
Consider job $i\in\mathcal{J}$.
If $q_i>0$, then job $i$ begins at its release time $r_i$.
That is, the gap of length $q_i$ immediately preceding job $i$ occurs exactly during the time interval $[r_i-q_i,r_i)$.
Hence, $q_i\leq r_i$.
Based on this observation, the obvious fact that $\sum_{i=\ell+1}^n\alpha_i\leq\sum_{i=1}^n\alpha_{\gamma(i)}$ and Lemma~\ref{Lemma:Fixed_Processing_Time_Load_Bound}, Equation~(\ref{Equation:Non_Interfering_Makespan_Expression}) implies that:
\begin{align}
\label{Equation:Non_Interfering_Upper_Bound}
T \leq (1+\beta)^{n-\ell}C_{\gamma(\ell)}^* + 
\sum_{i=\ell+1}^n\left(1+\frac{1}{\beta}\right)^{n-i+1}\beta^{n-i+1}r_i \nonumber \\
 + 2\left[\sum_{i=1}^n\alpha_{\gamma(i)} + \sum_{k=\ell+2}^n\beta(1+\beta)^{n-k}\left(\sum_{i=1}^{k-1}\alpha_{\gamma(i)}\right) \right]
\end{align}

By the definition of $\gamma(\cdot)$, Lemma~\ref{Lemma:New_Makespan_Expression} and Lemma~\ref{Lemma:Release_Times_Lower_Bound}, we get that
\begin{align}
\label{Equation:Non_Interfering_Lower_Bound}
T^*\geq\max\left\{(1+\beta)^{n-\ell}C_{\gamma(\ell)}^*, \sum_{j=1}^n\alpha_{\gamma(j)}+\sum_{k=2}^{n}\beta(1+\beta)^{n-k}\left(\sum_{i=1}^{k-1}\alpha_{\gamma(i)}\right), \sum_{i=1}^n\beta^{n-i+1}r_i\right\}
\end{align}

For $\beta\geq n+1$, we have that $(1+\frac{1}{\beta})^{n+1}\leq e$.
Therefore, Equations (\ref{Equation:Non_Interfering_Upper_Bound})-(\ref{Equation:Non_Interfering_Lower_Bound}) imply that $T\leq (3+e)T^*$.

For the lower bound, consider an instance with $n$ jobs, where $r_{\min}=B$, for some large constant $B=\omega(n)$.
Job $j\in\{1,\ldots,n\}$ has $\alpha_j=B+n-j$ and $r_j=\sum_{i=1}^j(1+\beta)^{i-1}B$. 
We show by induction on $j$ that no job begins before $r_j$ in the algorithm's schedule $\mathcal{S}$.
Given that $r_1<\ldots< r_n$, our claim trivially holds for $j=1$ because no job can be executed before $r_1=\min_{i\in\mathcal{J}}\{r_i\}$.
For the induction hypothesis, assume that our claim is true for some $j\geq 1$, i.e.\ no job begins before $r_j$ in $\mathcal{S}$.
Since $\alpha_1\geq\ldots\geq\alpha_j$, any job beginning at time $r_j$ would have completion time at least:
\begin{align*}
(1+\beta)r_j+\alpha_j\geq \sum_{i=1}^{j}(1+\beta)^{i-1}B + n-j > r_{j+1}.
\end{align*}
So, the algorithm will not schedule any job during $[r_j,r_{j+1})$, because otherwise this job would be interfering. 
Our claim implies that the algorithm schedules all jobs according to shortest fixed processing time first starting from $r_n$ and has makespan: 
\begin{equation*}
T=(1+\beta)^nr_n+\sum_{j=1}^n(1+\beta)^{n-j}B+\sum_{j=1}^n(1+\beta)^{n-j}(j-1) = \Omega((1+\beta)^{2n}B).
\end{equation*}
In an optimal schedule $\mathcal{S}^*$, all jobs begin at time $t=0$ and are consecutively executed according without any idle period between them according to earliest release time first. 
The makespan of $\mathcal{S}^*$ is:
\begin{align*}
T^*=\sum_{j=1}^n(1+\beta)^{n-j}B+\sum_{j=1}^n(1+\beta)^{n-j}(n-j) = O((1+\beta)^n).
\end{align*}
Hence, $T/T^*=\Omega((1+\beta)^n)$.
\end{proof}

\subsection{Non-Idling Algorithm}

Algorithm~\ref{Algorithm:No_Gap} constructs a feasible schedule by executing the shortest pending job whenever the machine becomes available.

\begin{algo}[Non-Idling]
\label{Algorithm:No_Gap}
Greedily schedule jobs over time, by initiating a pending job $\arg\min_{i\in\mathcal{P}(t)}\{\alpha_i\}$ with minimal fixed procesing time at each time $t$ that the machine becomes available.
\end{algo}


\begin{theorem}
\label{Theorem:Non_Gap}
Algorithm~\ref{Algorithm:No_Gap} is $(1+e)$-approximate for instances with $\beta\leq 1/n$ and $\Omega((1+\beta)^n)$-approximate for general instances.
\end{theorem}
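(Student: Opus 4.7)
The plan is to prove the two assertions separately, mirroring the structure used in the proof of Theorem~\ref{Theorem:Non_Interfering} but exploiting the fact that Non-Idling never creates unnecessary gaps, which removes the need for the pseudomatching-doubling step and tightens the constant.

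For the positive bound when $\beta\leq 1/n$, I would first invoke Lemma~\ref{Lemma:Delay_Propagation} to decompose the Non-Idling makespan as $T=Q+A$, where $Q=\sum_{i=1}^n(1+\beta)^{n-i+1}q_i$ and $A=\sum_{i=1}^n(1+\beta)^{n-i}\alpha_i$, with jobs numbered in Non-Idling execution order. Since $\beta\leq 1/n$, the classical inequality $(1+1/n)^n\leq e$ bounds every exponential weight in either sum by $e$. A gap $q_i>0$ in Non-Idling appears only when the machine becomes idle with no job pending, so each nonzero $q_i$ is immediately followed by a job starting exactly at its release time and the gaps therefore occupy pairwise-disjoint subintervals of $[0,r_{\max}]$; hence $\sum_i q_i\leq r_{\max}\leq T^*$, which combined with the weight bound gives $Q\leq e\cdot T^*$. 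For $A$, I would use $\sum_i\alpha_i\leq T^*$ (total fixed work is a lower bound on any makespan) and then refine via a rearrangement-style comparison: because Non-Idling greedily picks the shortest pending job, any inversion between the Non-Idling order $\pi$ and an optimum order $\gamma$ must be paid for in the optimum by a strictly positive weighted gap of matching value, yielding $A\leq T^*$. Summing the two bounds produces $T\leq (1+e)\,T^*$.

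For the lower bound on general instances, I would construct an adversarial family analogous to the one used in the proof of Theorem~\ref{Theorem:Non_Interfering}: take $n$ jobs with geometrically spaced release times $r_j=(1+\beta)^{j-1}B$ for a sufficiently large constant $B$, and fixed processing times arranged so that, whenever the machine becomes free in the Non-Idling schedule, exactly one new job has just been released and is the unique pending (hence ``shortest'') job. An induction on $j$ then shows that Non-Idling commits to the adversarial ordering $1,2,\ldots,n$, with each step compounding the factor $(1+\beta)$ and producing $T=\Omega\bigl((1+\beta)^{2n}B\bigr)$. An optimum that delays the start to exploit release slack and schedules the jobs back-to-back in the ``correct'' order achieves $T^*=O\bigl((1+\beta)^n B\bigr)$, yielding $T/T^*=\Omega((1+\beta)^n)$.

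The principal technical obstacle is the bound $A\leq T^*$ in the positive direction. Non-Idling's execution order may differ from the optimal order, so the naive estimate only gives $A\leq e\sum_i\alpha_i\leq e\,T^*$, which together with $Q\leq e\,T^*$ would merely produce a $2e$ approximation. Attaining the sharper $(1+e)$ constant relies on the rearrangement/exchange argument sketched above, which shows that each order inversion introduced by Non-Idling's greedy rule is absorbed by a matching weighted gap in the optimum; formalising this is the one place where care is required. The negative construction, by contrast, is essentially a direct analogue of the adversarial instance already worked out in the proof of Theorem~\ref{Theorem:Non_Interfering} and amounts to routine bookkeeping.
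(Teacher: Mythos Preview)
Your decomposition $T=Q+A$ matches the paper's, but you have the two legs swapped, and the bound $A\le T^*$ that you single out as the ``principal technical obstacle'' is in fact false. Take $n=2$, $\beta=1/2$, job~$1$ with $\alpha_1=10$, $r_1=0$, and job~$2$ with $\alpha_2=1$, $r_2=\epsilon$. Non-Idling must start job~$1$ at time~$0$, so its order is $(1,2)$ and $A=(1+\beta)\cdot 10+1=16$. The optimum waits until $\epsilon$, runs job~$2$ then job~$1$, and attains $T^*=(1+\beta)^2\epsilon+(1+\beta)+10\approx 11.5$, so $A>T^*$. Your rearrangement intuition fails here: the single inversion costs $\beta(\alpha_1-\alpha_2)=4.5$ in $A$, while the ``gap'' the optimum pays is only $(1+\beta)^2\epsilon$, which can be made arbitrarily small. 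There is no matching.

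The paper obtains $(1+e)$ by proving the \emph{opposite} pair of bounds. The trivial estimate is $A\le eT^*$, using $(1+\beta)^{n-i}\le(1+\beta)^n\le e$ and $\sum_i\alpha_i\le T^*$. The nontrivial step is $Q\le T^*$: a gap in Non-Idling occurs only when the pending set is empty, so if $\ell$ is the position of the first job after the \emph{last} gap and $u=s_\ell=r_\ell$ is its start time, then all $n-\ell+1$ jobs in $\{\ell,\ldots,n\}$ have release times $\ge u$. Any feasible schedule therefore has at least $n-\ell+1$ jobs starting at or after $u$, which forces $T^*\ge(1+\beta)^{n-\ell+1}u$; expanding $u=s_\ell$ via Lemma~\ref{Lemma:Delay_Propagation} shows that $(1+\beta)^{n-\ell+1}u\ge Q$. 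Your bound $Q\le eT^*$ via $\sum_iq_i\le r_{\max}$ is correct but, combined only with $A\le eT^*$, yields $2e$ rather than $1+e$.

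For the lower bound, the geometrically-spaced-release-time instance you borrow from the Non-Interfering analysis is engineered to punish an algorithm that \emph{waits}; it does not obviously hurt Non-Idling, which never waits, and you have not specified the $\alpha_j$ so the induction cannot be checked. The paper's construction is both simpler and different in kind: one long job with $\alpha_1=B$, $r_1=0$, together with $k$ zero-length jobs all released at time~$1$. Non-Idling is forced to run the long job first, pushing the $k$ short jobs to start at time $B$ and giving $T=(1+\beta)^kB$; the optimum runs the short jobs first from time~$1$ and then the long job, achieving $T^*=(1+\beta)^{k+1}+B$. Choosing $B=(1+\beta)^{k+1}$ gives $T/T^*=\Omega((1+\beta)^k)$.
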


\begin{proof}
On the positive side, consider a schedule $\mathcal{S}$ produced by the non-idling algorithm and number the jobs in increasing order $s_1<\ldots<s_n$ of their start times in $\mathcal{S}$.
Let $Q=\sum_{i=1}^n(1+\beta)^{n-i+1}q_i$ and $A=\sum_{i=1}^n(1+\beta)^{n-i}\alpha_i$ be the gap-dependent and fixed processing time costs of $\mathcal{S}$, respectively.
Next, we will show that $Q\leq T^*$ and $A\leq e T^*$, where $T^*$ is the optimal makespan.
By Lemma~\ref{Lemma:Delay_Propagation}, we get that $T=Q+A\leq(1+e)T^*$.

To bound the gap-dependent cost of the algorithm, we show the existence of an optimal schedule $\mathcal{S}^*$ satisfying the property that, for each idle time interval $[t,u)$ in $\mathcal{S}$, the interval $[t,u)$ is also idle in $\mathcal{S}^*$. 
For simplicity, we prove the lemma for the case where $\mathcal{S}$ contains a single maximal idle time interval, i.e.\ gap $q_j>0$, but the argument is naturally extended to an arbitrary number of gaps.
We may partition $\mathcal{J}$ into the sets $\mathcal{J}_A=\{i\in\mathcal{J}:s_i\geq t\}$ and $\mathcal{J}_B=\{i\in\mathcal{J}:s_i<t\}$ of jobs beginning after and before, respectively, time $t$ in $\mathcal{S}$.
Using this definition and the fact that $[t,u)$ is idle in the algorithm's schedule, we conclude that $r_i\geq u$ for each $i\in\mathcal{J}_A$ and $r_i<t$ for every $i\in\mathcal{J}_B$. 
Let $\mathcal{S}_A^*$ and $\mathcal{S}_B^*$ be optimal schedules for $\mathcal{J}_A$ and $\mathcal{J}_B$ of makespans $T_A^*$ and $T_B^*$, respectively.
Clearly, the schedule $\mathcal{S}^*$ obtained by merging $\mathcal{S}_A^*$ and $\mathcal{S}_B$ is feasible and optimal for $\mathcal{J}$ given that $T^*=T_A^*$. 
Thus, $Q\leq T^*$.
To bound the algorithm's fixed processing time cost, by using the standard Euler constant inequality $(1+\frac{1}{k})^k\leq e$, for each constant $k\geq 1$, we get that: 
\begin{align*}
A=\sum_{i=1}^n(1+\beta)^{n-i}\alpha_i\leq
(1+\beta)^n\left[\sum_{i=1}^n\alpha_i\right]\leq
e\left[\sum_{i=1}^n\alpha_i\right]\leq e T^*.
\end{align*}

On the negative side, consider an instance with $n=k+1$ jobs, namely a job of fixed processing time $\alpha_1=B>1$, release time $r_1=0$, and $k$ jobs with $\alpha_i=0$ and $r_i=1$, for $i\in\{2,\ldots,k+1\}$.
The non-idling schedule executes the jobs in increasing order of their indices, i.e.\ the first job completes at $C_1=B$ and all remaining jobs are consecutively executed starting at $C_1$. 
By Lemma~\ref{Lemma:Delay_Propagation}, $\mathcal{S}$ has makespan $T=(1+\beta)^kB$.
In an optimal schedule $\mathcal{S}^*$, all short jobs are consecutively executed during $[1,(1+\beta)^k]$, the long job begins right after and completes at $T^*=(1+\beta)^{k+1}+B$. 
If $B=(1+\beta)^{k+1}$,
\begin{align*}
\frac{T}{T^*}=\frac{(1+\beta)^kB}{(1+\beta)^{k+1}+B}=\Omega((1+\beta)^k).
\end{align*}
\end{proof}

\subsection{Best-of-Two Algorithm}

The \emph{best-of-two} algorithm returns the best among the non-idling and non-interfering schedules. 
Theorem~\ref{Theorem:Two_Release_Times_Constant} shows that this algorithm achieves a 2-approximation ratio when $r_i=\{0,r\}$ for each job $i\in\mathcal{J}$.

\begin{theorem}
\label{Theorem:Two_Release_Times_Constant}
The best-of-two algorithm is 2-approximate for instances with two distinct release times.
\end{theorem}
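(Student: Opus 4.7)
The plan is to compare each of the two candidate schedules to a carefully chosen optimal schedule and to show that at least one of them is within a factor of two of the optimum. Let $\mathcal{J}_0=\{i\in\mathcal{J}:r_i=0\}$ and $\mathcal{J}_r=\{i\in\mathcal{J}:r_i=r\}$, where $r>0$, and let $\mathcal{S}^*$ be an optimal schedule with makespan $T^*$. By a standard exchange together with Observation~\ref{Observation:Order_To_Schedule}, one may assume that $\mathcal{S}^*$ has at most one gap that, if present, ends exactly at time $r$ and is preceded only by some subset of $\mathcal{J}_0$ executed in non-decreasing order of $\alpha$. I would split the analysis according to whether $\mathcal{S}^*$ uses a gap.

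In the first case, $\mathcal{S}^*$ has no gap. I would compare the non-idling schedule $\mathcal{S}_{NI}$ to $\mathcal{S}^*$. Both schedules use the machine continuously from time $0$, so by Lemma~\ref{Lemma:Delay_Propagation} their makespans differ only through the job ordering, and Lemma~\ref{Lemma:New_Makespan_Expression} lets me rewrite each as a weighted sum over fixed processing times. The non-idling algorithm processes jobs of $\mathcal{J}_0$ in SPT order until time exceeds $r$ and then picks SPT among all pending jobs, whereas $\mathcal{S}^*$ must similarly fill up to time $r$ with $\mathcal{J}_0$ jobs to avoid a gap. I would build a weak pseudomatching (Definition~\ref{Def:G_Pseudomatchings}) between positions of $\mathcal{S}_{NI}$ and positions of $\mathcal{S}^*$, pairing each job that $\mathcal{S}_{NI}$ executes earlier than $\mathcal{S}^*$ does with a (no-smaller $\alpha$) job of $\mathcal{S}^*$ that displaces it across the threshold $r$. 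Invoking Lemma~\ref{Lem:G_Pseudomachings} and the fact that $r\le T^*$ whenever $\mathcal{J}_r\ne\emptyset$, one obtains $T_{NI}\le 2T^*$.

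In the second case, $\mathcal{S}^*$ contains a gap ending at $r$. Here I would compare the non-interfering schedule $\mathcal{S}_{NInt}$ to $\mathcal{S}^*$. The trigger of Algorithm~\ref{Algorithm:Non_Interfering}, namely the existence of a shorter $\mathcal{J}_r$ job that would be delayed by a longer $\mathcal{J}_0$ job crossing $r$, is precisely the condition that makes the optimum prefer a gap, so $\mathcal{S}_{NInt}$ will also insert a gap ending at $r$ (possibly running a few short $\mathcal{J}_0$ jobs beforehand, in SPT order) and proceed in SPT order thereafter. With only one gap on each side, the gap term in Lemma~\ref{Lemma:Delay_Propagation} for $\mathcal{S}_{NInt}$ is controlled by the analogous term of $\mathcal{S}^*$ using $q_i\le r$ and the fact that the two schedules share the same ``post-$r$'' starting point. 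The residual fixed processing time comparison is then handled by another pseudomatching in the spirit of Case 1.

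The main obstacle will be the book-keeping for the pseudomatching in each case: verifying that the ``out-of-order'' positions produced by the algorithm relative to $\mathcal{S}^*$ can be bijectively paired with positions of no-smaller fixed processing time, so that both properties of Definition~\ref{Def:G_Pseudomatchings} are satisfied and Lemma~\ref{Lem:G_Pseudomachings} applies cleanly. The restriction to two release times is exactly what keeps the number of such swaps small, since any mis-ordering between $\mathcal{S}_{NI}$ (or $\mathcal{S}_{NInt}$) and $\mathcal{S}^*$ can only cross the single threshold $r$, yielding the stated factor of two.
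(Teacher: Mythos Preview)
Your case split and your choice of which algorithm to analyze in each case match the paper's: when the optimum has a gap (equivalently, the last job started before $r$ finishes by $r$) you compare against the non-interfering schedule, and when the optimum has a crossing job you compare against the non-idling schedule. That part is fine.

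The genuine gap is your analytical tool. You propose to compare the two orderings via a \emph{weak} pseudomatching and invoke Lemma~\ref{Lem:G_Pseudomachings}. But Lemma~\ref{Lem:G_Pseudomachings} produces a multiplicative factor $(1+1/\beta)$ on the geometric-weighted sum, not a factor~$2$; for $\beta<1$ this exceeds $2$, and for $\beta\to 0$ it blows up. Nothing in your outline explains how adding the trivial bound $r\le T^*$ converts $(1+1/\beta)$ into~$2$. In the no-gap case, for instance, if both schedules were gap-free your argument would say $T_{NI}\le (1+1/\beta)T^*$, which is strictly weaker than the theorem for small $\beta$.

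The paper avoids pseudomatchings entirely here and uses a direct additive decomposition. It writes the algorithm's makespan as $T=(1+\beta)^{n-k}\,x + F(\mathcal{J}')$, where $k$ is the number of jobs the algorithm starts before $r$, $x$ is either $r$ (gap case) or the completion time $t$ of the crossing job (no-gap case), and $F(\cdot)$ is the SPT fixed-processing-time cost. The crux is the counting claim $k\ge k^*$: because the algorithm always picks a shortest pending $\mathcal{J}_0$ job before $r$, it fits at least as many jobs before $r$ as the optimum does. This yields $(1+\beta)^{n-k}x\le (1+\beta)^{n-k^*}x^*\le T^*$, and $F(\mathcal{J}')\le F(\mathcal{J})\le T^*$ is the standard SPT lower bound, giving $T\le 2T^*$. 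You should replace the pseudomatching step by this counting argument.
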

\begin{proof}
Consider an optimal schedule $\mathcal{S}^*$ of makespan $T^*$, in which job $i\in\mathcal{J}$ begins at $s_i^*$ and completes at $C_i^*$.
Further, denote by $k^*=|\{j\in\mathcal{J}:s_j^*<r\}|$ the number of jobs beginning before $r$ and let $\gamma$ be the order $s_{\gamma(1)}^*<\ldots<s_{\gamma(n)}^*$ of jobs in increasing start times, i.e.\ job $\gamma(i)$ is executed in the $i$-th position of $\mathcal{S}^*$.
For a given subset $\mathcal{J}'=\{\pi(1),\ldots,\pi(k)\}$ of $k$ jobs which are numbered so that $\alpha_{\pi(1)}\leq\ldots\leq\alpha_{\pi(k)}$, denote by $F(\mathcal{J}')=\sum_{i=1}^k(1+\beta)^{k-i}\alpha_{\pi(i)}$ their fixed processing time cost if they are continuously scheduled without gaps and other intermediate jobs in non-decreasing order of their fixed processing times.
We distinguish two cases based on whether $C_{\gamma(k^*)}^*\leq r$ and $C_{\gamma(k^*)}^*>r$.

In the former case, since $n-k^*$ jobs begin after $r$ in $\mathcal{S}^*$, Lemma~\ref{Lemma:Delay_Propagation} implies that $T^*\geq\max\{(1+\beta)^{n-k^*}r,F(\mathcal{J})\}$.
Assume that the algorithm's non-interfering schedule $\mathcal{S}$ has makespan $T$ and suppose that it associates a start time $s_j$ and completion time $C_j$ to each job $j\in\mathcal{J}$.
Also, let $k=|\{j\in\mathcal{J}:s_j<r\}|$.
We claim that $k\geq k^*$.
Assume for contradiction that $k<k^*$.
W.l.o.g.\ we may assume that $\alpha_{\gamma(1)}\leq\ldots\leq\alpha_{\gamma(k^*)}$, i.e.\ $\mathcal{S}^*$ schedules jobs in non-decreasing order of fixed processing times before $r$.
Because $\mathcal{S}$ schedules the pending job with the shortest fixed processing time at each time that the machine becomes available, it must be the case that $\alpha_i\leq\alpha_{\gamma(i)}$, for $1\leq i\leq k$.
If $C_k<C_{\gamma(k^*)}^*$, then there exists a job $j\in\mathcal{J}$ 
such that $s_j^*<r\leq s_j$ which can be feasibly executed during $[C_k,r]$ in $\mathcal{S}$, i.e.\ a contradiction on the definition of $k$. 
If $C_k\geq C_{\gamma(k^*)}^*$, then there exist jobs $i,j\in\mathcal{J}$ such that $\alpha_i>\alpha_j$, $s_i<r\leq s_j$ and 
$s_i^*\geq r>s_{j}^*$,
which contradicts the fact that the algorithm always schedules a pending job with a minimal processing time.
Hence, our claim is true.
By Lemma~\ref{Lemma:Delay_Propagation}, if $\mathcal{J}'=\{j\in\mathcal{J}:s_j\geq r\}$, then 
$T = (1+\beta)^{n-k}r+F(\mathcal{J}')
\leq (1+\beta)^{n-k^*}r+F(\mathcal{J})
\leq 2T^*$.

In the latter case, consider the non-idling schedule $\mathcal{S}$ of the algorithm, denote its makespan by $T$ and the execution interval of each job $j\in\mathcal{J}$ by $[s_j,C_j]$.
Let $t^*=\max_{j\in\mathcal{J}}\{C_j^*:s_j^*<r\}$ and $t=\max_{j\in\mathcal{J}}\{C_j:s_j<r\}$ be the completion time of the interfering job in $\mathcal{S}^*$ and $\mathcal{S}$, respectively.
Similarly before, $T^*\geq\max\{(1+\beta)^{n-k^*}t^*,F(\mathcal{J})\}$.
Given that $\mathcal{S}$ executes the jobs with the minimal fixed processing times until it encounters job $k$ with $C_k>r$, we have that $k\geq k^*$.
If $t\leq t^*$, then 
$T = (1+\beta)^{n-k}t+F(\mathcal{J}') \leq (1+\beta)^{n-k^*}t^*+F(\mathcal{J})
\leq 2T^*$, where $\mathcal{J}'=\{j\in\mathcal{J}:s_j\geq t\}$.
If $t>t^*$, then $k\geq k^*-1$, i.e.\ $T^*\geq(1+\beta)^{n-k-1}r$. 
Therefore, $T\leq (1+\beta)^{n-k-1}r+F(\mathcal{J}'\cup\{k\})\leq 2T^*$.
\end{proof}

\section{Earliest Completion-Time First}
\label{Section:ECTF}

Next, we consider the \emph{Earliest Completion Time First (ECTF)} algorithm and show that it is $O(1+\frac{1}{\beta})$-approximate.
ECTF produces a schedule satisfying Observation~\ref{Observation:Order_To_Schedule}.
That is, if jobs are numbered in increasing order $s_1<\ldots<s_n$ of their start times in $\mathcal{S}$, then job $i\in\mathcal{J}$ has start time $s_i=\max\{r_i,C_{i-1}\}$.
At every time $t$, let $\Gamma_{i}(t)=(1+\beta)\max\{t,r_{i}\}+\alpha_i$ be the completion time of job $i\in\mathcal{J}$, if $i$ is the next to be executed from time $t$ and onward.
In addition, denote by $\mathcal{F}(t)=\{i:i\in\mathcal{J},C_i\leq t\}$ the set of completed jobs at time $t$ in $\mathcal{S}$.

\begin{algo}[ECTF]
\label{Algorithm:ECTF}
At each time $t$ that the machine becomes available, schedule the uncompleted job $\arg\min_{i\in\mathcal{J}\setminus\mathcal{F}(t)}\{\Gamma_i(t)\}$ with the earliest completion time. 
\end{algo}

\begin{theorem}
\label{Theorem:ECTF}
Algorithm~\ref{Algorithm:ECTF} achieves an approximation ratio $\rho\in[2,3+\frac{1}{\beta}]$.
\end{theorem}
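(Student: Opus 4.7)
Number the jobs in the ECTF execution order so that $s_1<\ldots<s_n$, let $\gamma(\cdot)$ denote the order in which an optimal schedule $\mathcal{S}^*$ executes the jobs, and let $\ell=\max\{k:C_k\le C_{\gamma(k)}^*\}$ be the last critical position (with $\ell=0$ if no such $k$ exists). By Lemma~\ref{Lemma:Delay_Propagation},
\begin{equation*}
T=(1+\beta)^{n-\ell}C_\ell+\sum_{i=\ell+1}^{n}(1+\beta)^{n-i+1}q_i+\sum_{i=\ell+1}^{n}(1+\beta)^{n-i}\alpha_i.
\end{equation*}
The first summand is at most $T^*$, since $C_\ell\le C_{\gamma(\ell)}^*$ and applying Lemma~\ref{Lemma:Delay_Propagation} to $\mathcal{S}^*$ yields $(1+\beta)^{n-\ell}C_{\gamma(\ell)}^*\le T^*$. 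The plan is then to bound the fixed-processing-time sum by $(1+1/\beta)T^*$ and the gap sum by $T^*$, producing the desired $T\le(3+1/\beta)T^*$.

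For the fixed-processing-time sum, I would construct a weak pseudomatching (Definition~\ref{Def:G_Pseudomatchings}) between $\mathcal{A}=\{\alpha_i:i>\ell\}$ and $\mathcal{O}=\{\alpha_{\gamma(j)}:1\le j\le n\}$ and invoke Lemma~\ref{Lem:G_Pseudomachings}, which immediately gives the factor $1+1/\beta$ against the optimal fixed-processing-time cost, itself at most $T^*$ by Lemma~\ref{Lemma:Delay_Propagation}. The matching is built inductively on $k\in\{\ell+1,\ldots,n\}$ as in Lemma~\ref{Lemma:Fixed_Processing_Time_Load_Bound}: at step $k$, the non-criticality $C_k>C_{\gamma(k)}^*$ forces, via a pigeonhole argument, the existence of some $\gamma(j)$ with $j<k$ that is not yet scheduled by ECTF at time $s_k$, and ECTF's greedy rule $\Gamma_k(s_k)\le\Gamma_{\gamma(j)}(s_k)$ supplies the required edge $(k,\gamma(j))$ together with the inequality $\alpha_k\le\alpha_{\gamma(j)}$ whenever $\gamma(j)$ is pending at time $s_k$. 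For the gap sum, I would exploit the fact that a positive gap $q_i>0$ in an ECTF schedule forces $s_i=r_i$ (the machine is idle exactly until $i$'s release), so $q_i\le r_i$, and then charge these release times against Lemma~\ref{Lemma:Release_Times_Lower_Bound}, carefully accounting for the $(1+\beta)^{n-i+1}$ versus $\beta^{n-i+1}$ discrepancy to obtain a bound of $T^*$.

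The main obstacle is the subcase of the inductive step in which the pigeonhole job $\gamma(j)$ satisfies $r_{\gamma(j)}>s_k$: ECTF's rule then reduces to $\alpha_k+(1+\beta)s_k\le\alpha_{\gamma(j)}+(1+\beta)r_{\gamma(j)}$, which does not immediately yield $\alpha_k\le\alpha_{\gamma(j)}$ as required by Definition~\ref{Def:G_Pseudomatchings}. I expect to resolve this by either deflecting the edge to a later optimal position that will still be available, or by absorbing the excess $(1+\beta)(r_{\gamma(j)}-s_k)$ into the gap sum through the release-time lower bound, so that the two sums are jointly charged against $T^*$. The lower bound $\rho\ge 2$ is witnessed by a small adversarial instance in which ECTF's myopic completion-time rule commits to an early short job, forcing the remaining long job to start well after the point at which the optimal schedule could have interleaved them, giving a ratio approaching $2$ in a suitable limit of the release-time and fixed-processing-time parameters.
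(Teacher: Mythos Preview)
Your three–term decomposition and the intention to use a weak pseudomatching are the same as in the paper, but the two pieces you flag as ``to be handled carefully'' are exactly where the argument breaks, and the paper resolves them in a way you have not anticipated.

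First, your separate gap bound cannot be closed with Lemma~\ref{Lemma:Release_Times_Lower_Bound}. That lemma gives $T^*\ge\sum_i\beta^{n-i}r_i$ (with jobs sorted by release time), whereas your gap sum carries weights $(1+\beta)^{n-i+1}$; the ratio $(1+\beta)^{n-i+1}/\beta^{n-i}=(1+\beta)(1+1/\beta)^{n-i}$ is unbounded in $n$, so no constant (or even $O(1+1/\beta)$) charge is available this way. The paper never bounds the post-critical gap sum separately. Instead it evaluates the ECTF comparison at time $t=s_i-q_i$ (the moment the machine became idle), not at $s_i$: when the pigeonhole job $j$ satisfies $r_j\le t$, the inequality $\Gamma_i(t)\le\Gamma_j(t)$ reads $(1+\beta)s_i+\alpha_i\le(1+\beta)t+\alpha_j$, i.e.\ $(1+\beta)q_i+\alpha_i\le\alpha_j$. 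Thus the gap term and the $\alpha$-term are absorbed \emph{together} into the weak-pseudomatching edge, and Lemma~\ref{Lem:G_Pseudomachings} bounds $\sum_{i}(1+\beta)^{n-i}[(1+\beta)q_i+\alpha_i]$ in one shot.

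Second, the obstacle you isolate---the case $r_{\gamma(j)}>s_k$ (more precisely $r_j>s_i-q_i$)---is not patched by deflection or by pushing the excess into the gap sum. The paper handles it by changing the definition of ``critical''. It partitions jobs into well-ordered ($i\ge\pi(i)$) and inverted ($i<\pi(i)$), and declares a position critical if either it is well-ordered with $q_i>0$, or it is inverted and \emph{some} admissible pigeonhole witness $j$ has $r_j>s_i-q_i$. In both cases one shows directly $C_i\le C_{\gamma(\pi(j))}^*$ with $\pi(j)\le i$, which yields the analogue of your first summand bound. After the last critical index in \emph{this} sense, every inverted job has a witness with $r_j\le s_i-q_i$, so the combined inequality above always holds; and every well-ordered job has $q_i=0$ and $i\ge\pi(i)$, so its contribution is bounded termwise. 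Your choice $\ell=\max\{k:C_k\le C_{\gamma(k)}^*\}$ does not guarantee either of these structural facts for $i>\ell$, which is precisely why you hit the obstacle.
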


\begin{proof}
We initially prove the upper bound. 
Denote the non-interfering schedule and an optimal schedule by $\mathcal{S}$ and $\mathcal{S}^*$, respectively.
Number the jobs in increasing order $s_1<\ldots<s_n$ of their start times in $\mathcal{S}$.
That is, job $i\in\mathcal{J}$ is executed in the $i$-th position of $\mathcal{S}$.
Let $\pi(i)\in\{1,\ldots,n\}$ be the position at which job $i\in\mathcal{J}$ is executed in $\mathcal{S}^*$. 
Analogously, denote by $\gamma(i)\in\mathcal{J}$ the job executed in the $i$-th position of $\mathcal{S}$, for $i\in\{1,\ldots,n\}$.

We partition the set $\mathcal{J}$ of jobs into the subset $\mathcal{W}=\{i:i\geq\pi(i)\}$ of \emph{well-ordered jobs} whose position in $\mathcal{S}$ is greater than or equal to their position in $\mathcal{S}^*$ and the subset $\mathcal{I}=\{i:i<\pi(i)\}$ of \emph{inverted jobs} executed at a strictly smaller position in $\mathcal{S}$ compared to their position in $\mathcal{S}^*$.
Consider an arbitrary inverted job $i\in\mathcal{I}$ executed in a subsequent position $\pi(i)\in\{i+1,\ldots,n\}$ in $\mathcal{S}^*$.
By a simple pigeonhole principle argument, a key observation is that there exists a job $j$ executed after $i$ in $\mathcal{S}$ and not later than the $i$-th position in $\mathcal{S}^*$, i.e.\ $\pi(j)\leq i<j$.
Clearly, job $j$ is well-ordered, i.e.\ $j\in\mathcal{W}$.


Consider the start times $s_i$ and $s_i^*$ of job $i\in\mathcal{J}$ and the immediately preceding gaps $q_i$ and $q_i^*$ in $\mathcal{S}$ and $\mathcal{S}^*$, respectively. 
Based on the previous observation, define the set $\mathcal{K}_I=\{i:i\in\mathcal{I},\exists\; j\in\mathcal{W}\text{ s.t.\ }\pi(j)\leq i<j, r_j>s_i-q_i\}$ of \emph{critical inverted jobs}.
That is, for each job $k\in\mathcal{K}_I$, there exists a well-ordered job $\ell$ such that $\pi(\ell)\leq k<\ell$ and $\ell$ is released after $s_k-q_k$.
Given that ECTF executes $k$ before $\ell$, it must be the case that $\Gamma_k(s_k-q_k)\leq\Gamma_{\ell}(s_k-q_k)$, i.e.\ $C_k\leq (1+\beta)r_{\ell}+\alpha_{\ell}\leq C_{\ell}^*$. 
Thus, we get that $\sum_{i=1}^k(1+\beta)^{k-i+1}q_i+\sum_{i=1}^k(1+\beta)^{k-i}\alpha_i\leq\sum_{i=1}^{\pi(\ell)}(1+\beta)^{\pi(\ell)-i+1}q_{\gamma(i)}^*+\sum_{i=1}^{\pi(\ell)}(1+\beta)^{\pi(\ell)-i}\alpha_{\gamma(i)}$.
By taking into account that $\pi(\ell)\leq k$ and multiplying both sides with $(1+\beta)^{n-k}$:
\begin{align}
\label{Equation:ECTF_Critical}
\sum_{i=1}^k(1+\beta)^{n-i+1}q_i+\sum_{i=1}^k(1+\beta)^{n-i}\alpha_i\leq\sum_{i=1}^{k}(1+\beta)^{n-i+1}q_{\gamma(i)}^*+\sum_{i=1}^{k}(1+\beta)^{n-i}\alpha_{\gamma(i)}.
\end{align}

Next, define the set $\mathcal{K}_W=\{i:i\in\mathcal{W},q_i>0\}$ of \emph{critical well-ordered jobs}.
Consider a job $k\in\mathcal{K}_W$.
Given that $q_k>0$, job $k$ begins at its release time in $\mathcal{S}$, i.e.\ $s_k=r_k$.
That is, $C_k=(1+\beta)r_k+\alpha_k\leq C_{k}^*$, or equivalently $\sum_{i=1}^k(1+\beta)^{k-i+1}q_i+\sum_{i=1}^k(1+\beta)^{k-i}\alpha_i\leq\sum_{i=1}^{\pi(k)}(1+\beta)^{\pi(k)-i+1}q_{\gamma(i)}^*+\sum_{i=1}^{\pi(k)}(1+\beta)^{\pi(k)-i}\alpha_{\gamma(i)}$.
By taking into account the fact that $k$ is well-ordered, i.e.\ $\pi(k)\leq k$, and multiplying both sides of the inequality with $(1+\beta)^{n-k}$, we conclude that Eq.\ (\ref{Equation:ECTF_Critical}) holds for each job $k\in K_W$ as well.

Let $\mathcal{K}=\mathcal{K}_I\cup\mathcal{K}_W$ be the set of all critical jobs and consider the maximum index critical job $k=\max\{i:i\in\mathcal{K}\}$.
Next, denote by $\mathcal{W}_k=\{i:i>k,i\in\mathcal{W}\}$ and $\mathcal{I}_k=\{i:i>k,i\in\mathcal{I}\}$ the well-ordered and inverted jobs, respectively, of index strictly greater than the one of the maximum index critical job $k$.
For each job $i\in\mathcal{J}$ with $i>k$, either $i\in\mathcal{W}_k$, or $i\in\mathcal{I}_k$.
In the former case, since $i\in\mathcal{W}_k$, we have that $i\geq\pi(i)$.
Because $i>k$, it also holds that $q_i=0$. 
Therefore, $(1+\beta)^{n-i+1}q_i+(1+\beta)^{n-i}\alpha_i\leq(1+\beta)^{n-\pi(i)}\alpha_i$.
By summing over all jobs in $\mathcal{W}_k$, we get that
\begin{align}
\sum_{i\in\mathcal{W}_k}(1+\beta)^{n-i+1}q_i+\sum_{i\in\mathcal{W}_k}(1+\beta)^{n-i}\alpha_i \leq
\sum_{i\in\mathcal{W}_k}(1+\beta)^{n-\pi(i)}\alpha_i.
\end{align}
In the latter case, i.e.\ $i\in\mathcal{I}_k$, because $i<\pi(i)$, the pigeonhole principle argument mentioned earlier implies that there exists a well-ordered job $j\in\mathcal{W}$ such that $\pi(j)\leq i<j$.
Let $t=s_i-q_i$.
Because $i>k$, i.e.\ $i$ is non-critical, it must be the case that $r_j\leq t$.
Due to the ECTF policy and given that job $i$ is executed before job $j$ in $\mathcal{S}$, we have that
\begin{align*}
\Gamma_i(t)\leq\Gamma_j(t) & \Rightarrow (1+\beta)\max\{t,s_i\} + \alpha_i \leq (1+\beta)\max\{t,r_j\} + \alpha_j \\
& \Rightarrow (1+\beta)q_i + \alpha_i \leq \alpha_j.
\end{align*}
Taking also into account that $\pi(j)\leq i$, $(1+\beta)^{n-i+1}q_i+(1+\beta)^{n-i}\alpha_i\leq (1+\beta)^{n-\pi(j)}\alpha_j$.
We pick such a well-ordered job $j$ arbitrarily, match it with $i$, and denote it by $\mu(i)=j$. 
Let $\mathcal{M}_j=\{i:i\in\mathcal{I},\mu(i)=j\}$ be the set of inverted jobs matched with a job $j\in\mathcal{W}$.
If $i\in\mathcal{M}_j$, then it clearly holds that $\pi(j)\leq i$.
Thus, based on the weak pseudomatching bound (Lemma~7),
\begin{align*}
\sum_{i\in\mathcal{M}_j}[(1+\beta)^{n-i+1}q_i+(1+\beta)^{n-i}\alpha_i] &\leq\sum_{i=\pi(j)}^n(1+\beta)^{n-i}\alpha_j \\
& =\left[\frac{(1+\beta)^{n-\pi(j)+1}-1}{(1+\beta)-1}\right]\alpha_j \\
&\leq \left(1+\frac{1}{\beta}\right)(1+\beta)^{n-\pi(j)}\alpha_j
\end{align*}
That is, we get that:
\begin{align}
\sum_{i\in\mathcal{I}_k}[(1+\beta)^{n-i+1}q_i+(1+\beta)^{n-i}\alpha_i]
& = \sum_{j\in\mathcal{W}}\sum_{i\in\mathcal{M}_j}[(1+\beta)^{n-i+1}q_i+(1+\beta)^{n-i}\alpha_i] \nonumber \\
& \leq \left(1+\frac{1}{\beta}\right)\left[\sum_{j\in\mathcal{W}} (1+\beta)^{n-\pi(j)}\alpha_{j}\right]
\end{align}
The algorithm achieves makespan:
\begin{align}
T =\sum_{i=1}^k(1+\beta)^{n-i+1}[q_i+(1+\beta)^{n-i}\alpha_i] + \sum_{i\in\mathcal{W}_k}(1+\beta)^{n-i}\alpha_i \nonumber \\ 
+ \sum_{i\in\mathcal{I}_k}[(1+\beta)^{n-i+1}q_i+(1+\beta)^{n-i}\alpha_i] \label{Equation:ECTF_Upper_Bound}
\end{align}
For the optimal makespan, it clearly holds that:
\begin{align}
T^*\geq \max\left\{\sum_{i=1}^n(1+\beta)^{n-i+1}q_{\gamma(i)}^*+\sum_{i=1}^n(1+\beta)^{n-i}\alpha_{\gamma(i)},\sum_{i\in\mathcal{W}}(1+\beta)^{n-\pi(i)}\alpha_i\right\}
\label{Equation:ECTF_Lower_Bound}
\end{align}
By Eq.\ (\ref{Equation:ECTF_Critical})-(\ref{Equation:ECTF_Lower_Bound}), we conclude that $T\leq(3+\frac{1}{\beta})T^*$. 

\paragraph{Lower Bound}
Next, we show the lower bound. 
We consider an instance with $n=2k$ jobs: $k$ short jobs and $k$ long jobs.
The $j$-th long job has $r_j^L=0$ and $\alpha_j^L=(1+\beta)B$, for $j\in\{1,\ldots,k\}$.
The $j$-th short jobs has release time $r_j^S=\sum_{i=1}^j(1+\beta)^{i-1}B$ and and $\alpha_j^S=0$, for $j\in\{1,\ldots,k\}$.
Let $\mathcal{S}$ be a schedule produced by ECTF.
We show by induction that all small jobs are executed before all long jobs in $\mathcal{S}$, i.e.\ the $j$-th short job is executed during $[\sum_{i=1}^j(1+\beta)^{i-1}B, \sum_{i=1}^j(1+\beta)^iB)$. 
For $j=1$, the small job completes at $(1+\beta)B$ and any long job has completion time $\geq(1+\beta)B$ in any feasible schedule.
Next, assume that our claim is true for some $j\in\{1,\ldots,k-1\}$.
Since the $j$-th short job has completion time $C_j=\sum_{i=1}^j(1+\beta)^iB$, the $(j+1)$-th job begins at its release time $r_{j+1}>C_j$ and completes at $\sum_{i=1}^{j+1}(1+\beta)^iB$, while any long job would complete at $\geq C_{j+1}$ if it began at $C_j$.
Therefore,
\begin{align*}
T=\sum_{i=1}^{2k}(1+\beta)^iB=\frac{(1+\beta)B}{\beta}\left[(1+\beta)^{2k}-1\right].
\end{align*}
On the other hand, the optimal solution executes all long jobs before all short jobs and has makespan:
\begin{align*}
T^*=\sum_{i=k+1}^{2k}(1+\beta)^iB=\frac{(1+\beta)^{k+1}B}{\beta}\left[(1+\beta)^{k}-1\right].
\end{align*}
Therefore, we conclude that 
\begin{align*}
\beta T & = (1+\beta)^{2k+1}B-(1+\beta)B \\
& =\left[(1+\beta)^{2k+1}B-(1+\beta)^{k+1}B\right] + \left[(1+\beta)^{k+1}B-(1+\beta)B\right] \\
& \leq \left[1+\frac{1}{(1+\beta)^k}\right]\beta T^*
\end{align*}
Hence, $T/T^*\geq(1+\frac{1}{(1+\beta)^k})$.
If $\beta=\omega(1)$, then $\mathcal{S}$ is 2-approximate.
\end{proof}

\section{Total Completion Time Objective}
\label{Section:Sum_Completion_Times}

This section explores relationships between the problems of minimizing the makespan $\max_{i\in\mathcal{J}}\{C_i\}$ and the sum of completion times $\sum_{i\in\mathcal{J}}C_i$.
Theorem \ref{Theorem:Sum_Of_Completion_Times_To_Makespan} shows that any $O(1)$-approximate schedule for the former is also $O(1)$-approximate for the latter. 

\begin{theorem}
\label{Theorem:Sum_Of_Completion_Times_To_Makespan}
Any $\rho$-approximation algorithm for minimizing the sum of completion times is $(1+\rho)$-approximate for minimizing the makespan.
\end{theorem}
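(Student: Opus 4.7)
The plan is to chain three elementary inequalities. Let $\mathcal{S}$ denote the schedule returned by the $\rho$-approximation algorithm for total completion time, with makespan $T_\mathcal{S}$ and sum $\Sigma_\mathcal{S}=\sum_i C_i(\mathcal{S})$. Let $T^*$ and $\Sigma^*$ denote the optimal makespan and total completion time, respectively, and let $\mathcal{S}^*_{\max}$ be a makespan-optimal schedule.

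First, I would use the trivial inequality $T_\mathcal{S}=\max_i C_i(\mathcal{S})\leq\sum_i C_i(\mathcal{S})=\Sigma_\mathcal{S}$, valid because $T_\mathcal{S}$ is one of the summands of $\Sigma_\mathcal{S}$ and the remaining completion times are nonnegative. Second, by the $\rho$-approximation guarantee, $\Sigma_\mathcal{S}\leq\rho\Sigma^*$, so $T_\mathcal{S}\leq\rho\Sigma^*$. Third, I would bound $\Sigma^*$ in terms of $T^*$: since $\Sigma^*$ is the minimum sum, $\Sigma^*\leq\Sigma(\mathcal{S}^*_{\max})$, and it remains to establish $\Sigma(\mathcal{S}^*_{\max})\leq(1+1/\rho)T^*$. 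Combining the three steps then gives $T_\mathcal{S}\leq(1+\rho)T^*$.

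The main obstacle is the bound on $\Sigma(\mathcal{S}^*_{\max})$. A natural route exploits Lemma~\ref{Lemma:Delay_Propagation}: both $T$ and $\Sigma$ can be written as weighted sums of the quantities $(1+\beta)q_i+\alpha_i$, with $T$-coefficient $(1+\beta)^{n-i}$ and $\Sigma$-coefficient $\frac{(1+\beta)^{n-i+1}-1}{\beta}$; the ratio between corresponding coefficients is bounded above by $1+1/\beta$, so $\Sigma\leq(1+1/\beta)T$ for every schedule. Applied to $\mathcal{S}^*_{\max}$, this yields $\Sigma^*\leq(1+1/\beta)T^*$, which already gives the claimed $(1+\rho)T^*$ bound in the regime $\beta\geq\rho$. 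For the remaining regime $\beta<\rho$, I would either tighten the bound on $\Sigma(\mathcal{S}^*_{\max})$ using the specific structure of makespan-optimal schedules (which avoid the geometric blow-ups that make $\Sigma$ saturate the $(1+1/\beta)T$ bound), or bypass $\Sigma^*$ altogether and compare $\mathcal{S}$ and $\mathcal{S}^*_{\max}$ via a pseudomatching argument in the style of Section~\ref{Section:Pseudomatchings} to control $T_\mathcal{S}-T^*$ additively.
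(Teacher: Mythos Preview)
Your chain $T_{\mathcal{S}}\leq\Sigma_{\mathcal{S}}\leq\rho\,\Sigma^*$ is correct, but the very first step discards too much, and no later tightening can recover it. The inequality you then need, $\Sigma^*\leq(1+1/\rho)\,T^*$, is false in general: take $n$ jobs with $\alpha_i=1$, $r_i=0$ and small $\beta$; then $T^*\approx n$ while $\Sigma^*\approx n(n+1)/2$, so $\Sigma^*/T^*\approx n/2$ is unbounded in $n$. This same instance is its own makespan-optimal schedule, so your first proposed repair (``tighten $\Sigma(\mathcal{S}^*_{\max})$ using the specific structure of makespan-optimal schedules'') cannot bring $\Sigma(\mathcal{S}^*_{\max})/T^*$ below $\Theta(n)$. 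Your second repair (``bypass $\Sigma^*$ via a pseudomatching'') is not workable as stated either: the only hypothesis on $\mathcal{S}$ is its $\Sigma$-guarantee, so any argument must route through sums of completion times somewhere; a pseudomatching gives no handle on an unknown schedule beyond what the $\rho$-bound already gives.

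The paper avoids this blow-up by never comparing $T_{\mathcal{S}}$ to a sum of completion times. Instead it uses the identity
\[
T \;=\; \sum_i q_i \;+\; \sum_i p_i(s_i) \;=\; \sum_i q_i \;+\; \sum_i \alpha_i \;+\; \beta\sum_i s_i,
\]
i.e.\ makespan equals total idle time plus total actual processing time. The idle part is at most $r_{\max}\leq T^*$. For the $\beta\sum_i s_i$ part, expand the $\rho$-guarantee $\sum_i C_i\leq\rho\sum_i C_i^*$ (with $C_i^*$ taken in a makespan-optimal schedule $\mathcal{S}^*$) as $(1+\beta)\sum_i s_i+\sum_i\alpha_i\leq\rho\bigl[(1+\beta)\sum_i s_i^*+\sum_i\alpha_i\bigr]$ and solve for $\sum_i s_i$. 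Substituting back and using $\sum_i(\alpha_i+\beta s_i^*)\leq T^*$ (these are exactly the processing times packed disjointly into $[0,T^*]$) gives $T\leq r_{\max}+\rho\sum_i(\alpha_i+\beta s_i^*)\leq(1+\rho)\,T^*$, uniformly in $\beta$ and $n$. The essential idea you are missing is this ``makespan $=$ gaps $+$ total work'' decomposition, which keeps the comparison at the level of sums of processing times rather than sums of completion times.
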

\begin{proof}
Suppose that $\mathcal{S}$ and $\mathcal{S}^*$ are a $\rho$-approximate schedule for minimizing the sum of completion times and an optimal schedule for minimizing makespan, while $T$ and $T^*$ are the makespans of the two schedules.
Assuming that $s_i$, $C_i$, and $q_i$ the start time, completion time, and gap associated with job $i\in\mathcal{J}$ in $\mathcal{S}$, we similarly define $s_i^*$, $C_i^*$, and $q_i^*$ for $\mathcal{S}^*$.
Given that $\mathcal{S}$ is $\rho$-approximate for minimizing $\sum_{i=1}^nC_i$, the job start times in the two schedules can be related as follows:
\begin{align*}
\sum_{i=1}^nC_i\leq\rho\left[\sum_{i=1}^nC_i^*\right]\Rightarrow 
& \sum_{i=1}^n[(1+\beta)s_i+\alpha_i] \leq \rho\left[\sum_{i=1}^n[(1+\beta)s_i^*+\alpha_i]\right] \\
\Rightarrow & \sum_{i=1}^ns_i \leq \rho\left[\sum_{i=1}^ns_i^*\right] + \frac{\rho-1}{1+\beta}\left[\sum_{i=1}^n\alpha_i\right]
\end{align*}
Observe that $\sum_{i=1}^nq_i\leq r_{\max}$, where $r_{\max}=\max_{i=1}^n\{r_i\}$ is the maximum release time, because gaps may only occur before release times in a canonical schedule.
To upper bound the makespan of $\mathcal{S}$:
\begin{align*}
T & = \sum_{i=1}^n[q_i+p_i(s_i)] \\ 
& = \sum_{i=1}^nq_i + \sum_{i=1}^n\alpha_i + \beta\left[\sum_{i=1}^ns_i\right] \\
& \leq r_{\max} + \frac{1+\rho\beta}{1+\beta}\left[\sum_{i=1}^n\alpha_i\right] + \rho\beta\left[\sum_{i=1}^ns_i^*\right] \\
& \leq r_{\max} + \rho\left[\sum_{i=1}^n(\beta s_i^*+\alpha_i)\right]
\end{align*}
The last inequality follows from the fact that $\rho\geq 1$.
Given that $T^*\geq\max\{\sum_{i=1}^n(\beta s_i^*+\alpha_i),r_{\max}\}$, we conclude that $T\leq(1+\rho)T^*$. 
\end{proof}


Theorem~\ref{Theorem:Makespan_To_Sum_Of_Completion_Times} shows that any $O(1)$-approximate schedule for $\max_{i\in\mathcal{J}}\{C_i\}$ is $O(1+\frac{1}{\beta})$-approximate for $\sum_{i\in\mathcal{J}}C_i$.
This result and Theorem~\ref{Theorem:ECTF} directly imply that there exists an $O(1)$-approximation algorithm for minimizing the sum of completion times when $\beta=\Omega(1)$. 

\begin{theorem}
\label{Theorem:Makespan_To_Sum_Of_Completion_Times}
Any $\rho$-approximation algorithm for minimizing the makespan is $(1+\frac{1}{\beta})\rho$-approximate for minimizing the sum of completion times.
\end{theorem}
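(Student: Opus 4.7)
The plan is to chain two inequalities. First, for any feasible schedule with makespan $T$, I will show the bound $\sum_{i\in\mathcal{J}} C_i \leq (1+\tfrac{1}{\beta})T$. Second, I will observe that the optimal makespan $T^*$ is always at most the optimal total completion time $Z^*$. Combining these with the hypothesis that the schedule returned by the algorithm is $\rho$-approximate for makespan, the chain
\[
\sum_{i\in\mathcal{J}} C_i \;\leq\; \left(1+\tfrac{1}{\beta}\right)T \;\leq\; \left(1+\tfrac{1}{\beta}\right)\rho\, T^* \;\leq\; \left(1+\tfrac{1}{\beta}\right)\rho\, Z^*
\]
proves the theorem.

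For the first inequality I reuse the per-job identity $C_k=\sum_{i=1}^k (1+\beta)^{k-i}\bigl[(1+\beta)q_i+\alpha_i\bigr]$ that was established by induction inside the proof of Lemma~\ref{Lemma:Delay_Propagation}. Summing over $k\in\{1,\dots,n\}$ and exchanging the order of summation yields
\begin{align*}
\sum_{k=1}^n C_k
&= \sum_{i=1}^n \bigl[(1+\beta)q_i+\alpha_i\bigr]\sum_{k=i}^n(1+\beta)^{k-i} \\
&= \sum_{i=1}^n \bigl[(1+\beta)q_i+\alpha_i\bigr]\cdot\frac{(1+\beta)^{n-i+1}-1}{\beta}.
\end{align*}
Expanding the product, the terms $(1+\beta)^{n-i+2}q_i$ and $(1+\beta)^{n-i+1}\alpha_i$ assemble exactly to $(1+\beta)T$ via the Lemma~\ref{Lemma:Delay_Propagation} expression for $T$, while the remaining contributions are the non-positive residuals $-(1+\beta)\sum_i q_i$ and $-\sum_i\alpha_i$. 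Dropping these residuals, which are non-positive since $q_i,\alpha_i\geq 0$, leaves $\sum_{i\in\mathcal{J}} C_i \leq (1+\tfrac{1}{\beta})T$.

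For the second inequality, note that the makespan of any schedule is one of the $n$ nonnegative terms in its sum of completion times, so it is bounded above by that sum. Applying this to a schedule that is optimal for total completion time gives a particular schedule whose makespan is at most $Z^*$; since $T^*$ is the minimum makespan over all feasible schedules, $T^*\leq Z^*$.

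The main obstacle is really only the algebraic reduction in the first step: it is routine but requires careful bookkeeping of the geometric factors so that one recognizes a clean $(1+\beta)T$ emerging inside the sum. Conceptually, no new combinatorial machinery (e.g.\ pseudomatchings from Section~\ref{Section:Pseudomatchings}) is required here, since we compare a single schedule to itself under two objectives and then invoke only the minimality of $T^*$ to relate the two optima.
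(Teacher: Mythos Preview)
Your proof is correct and takes a genuinely different, and in fact cleaner, route than the paper's. The paper compares the algorithm's schedule $\mathcal{S}$ directly against a schedule $\mathcal{S}^*$ optimal for $\sum C_i$: it writes both makespans as $\sum_i q_i + \sum_i(\beta s_i+\alpha_i)$, uses $T\le\rho T^*$ to bound $\sum_i s_i$ in terms of $\sum_i s_i^*$, $r_{\max}$ and $\sum_i\alpha_i$ (via $\sum_i q_i^*\le r_{\max}$), and finally invokes the lower bound $\sum_i C_i^*\ge r_{\max}+\sum_i\alpha_i$. Your argument avoids all of this start-time bookkeeping by proving the purely structural inequality $\sum_i C_i\le(1+\tfrac{1}{\beta})T$ for \emph{any} feasible schedule, obtained by summing the per-job formula from Lemma~\ref{Lemma:Delay_Propagation} and recognizing $(1+\beta)T$ inside the geometric sum; the link to the other optimum then reduces to the trivial observation $T^*\le Z^*$. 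What you gain is simplicity and a reusable ``objective conversion'' inequality that holds schedule-by-schedule; what the paper's approach might offer is a template closer to the companion Theorem~\ref{Theorem:Sum_Of_Completion_Times_To_Makespan}, where a symmetric manipulation of start times is used, though it does not yield a tighter bound here.
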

\begin{proof}
Suppose that $\mathcal{S}$ and $\mathcal{S}^*$ are a $\rho$-approximate schedule for minimizing the makespan and an optimal schedule for minimizing the sum of completion times, respectively. 
Assuming that $s_i$, $C_i$, and $q_i$ are the start time, completion time, and gap associated with job $i\in\mathcal{J}$ in $\mathcal{S}$, we similarly define $s_i^*$, $C_i^*$, and $q_i^*$ for $\mathcal{S}^*$.
Given that $\mathcal{S}$ is $\rho$-approximate for the makespan objective and the fact that $\sum_{i=1}^nq_i^*\leq r_{\max}$, we get that:
\begin{align*}
T \leq \rho T^* & \Rightarrow 
\sum_{i=1}^n[\beta s_i+\alpha_i] \leq \rho\left[\sum_{i=1}^nq_i^*+\sum_{i=1}^n[\beta s_i^*+\alpha_i]\right] \\
& \Rightarrow \sum_{i=1}^n s_i \leq \rho\left[\sum_{i=1}^ns_i^*\right] +
\frac{\rho}{\beta}\left[r_{\max}+\sum_{i=1}^n\alpha_i\right]
\end{align*}
Since $\rho\geq 1$, we can upper bound the sum of completion times of $\mathcal{S}$ as follows:
\begin{align*}
\sum_{i=1}^nC_i 
& = (1+\beta)\left[\sum_{i=1}^ns_i\right]+\sum_{i=1}^n\alpha_i \\
& \leq \rho\left[\sum_{i=1}^n[(1+\beta)s_i^*+\alpha_i]\right] + 
\frac{\rho}{\beta}\left[r_{\max}+\sum_{i=1}^n\alpha_i\right] 
\end{align*}
Because $\sum_{i=1}^nC_i^*=\sum_{i=1}^n[(1+\beta)s_i^*+\alpha_i]\geq r_{\max}+\sum_{i=1}^n\alpha_i$, we get that $\sum_{i=1}^nC_i\leq(1+\frac{1}{\beta})\rho\left[\sum_{i=1}^n C_i^*\right].$
\end{proof}

\begin{corollary}
Algorithm~\ref{Algorithm:ECTF} is $O(1+\frac{1}{\beta^2})$-approximate for minimizing $\sum_{i\in\mathcal{J}}C_i$.
\end{corollary}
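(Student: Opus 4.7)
The plan is to derive this as an immediate consequence of the two main results that precede the corollary, namely Theorem~\ref{Theorem:ECTF} and Theorem~\ref{Theorem:Makespan_To_Sum_Of_Completion_Times}. The corollary is essentially a composition of two approximation guarantees: ECTF approximates the makespan to within a factor $\rho = 3 + \tfrac{1}{\beta}$, and any $\rho$-approximate makespan schedule is automatically $(1+\tfrac{1}{\beta})\rho$-approximate for the total completion time.

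The key computational step is to plug $\rho = 3 + \tfrac{1}{\beta}$ into the factor $(1+\tfrac{1}{\beta})\rho$ delivered by Theorem~\ref{Theorem:Makespan_To_Sum_Of_Completion_Times} and expand:
\begin{equation*}
\left(1+\tfrac{1}{\beta}\right)\left(3+\tfrac{1}{\beta}\right) = 3 + \tfrac{4}{\beta} + \tfrac{1}{\beta^2}.
\end{equation*}
I would then observe that $3 + \tfrac{4}{\beta} + \tfrac{1}{\beta^2} = O\!\left(1 + \tfrac{1}{\beta^2}\right)$, since for $\beta \geq 1$ the expression is bounded by a constant, while for $\beta < 1$ the $\tfrac{1}{\beta^2}$ term dominates and $\tfrac{4}{\beta} \leq \tfrac{4}{\beta^2}$.

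There is really no obstacle here: the work is entirely done by the two referenced theorems, and all that remains is to chain them and to note the asymptotic simplification. The proof should consist of one short sentence invoking Theorem~\ref{Theorem:ECTF}, a second sentence invoking Theorem~\ref{Theorem:Makespan_To_Sum_Of_Completion_Times}, and the one-line expansion above.
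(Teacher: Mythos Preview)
Your proposal is correct and matches the paper's approach exactly: the corollary is stated without proof in the paper, as it follows immediately from combining Theorem~\ref{Theorem:ECTF} with Theorem~\ref{Theorem:Makespan_To_Sum_Of_Completion_Times}, precisely via the computation $(1+\tfrac{1}{\beta})(3+\tfrac{1}{\beta}) = 3 + \tfrac{4}{\beta} + \tfrac{1}{\beta^2} = O(1+\tfrac{1}{\beta^2})$ that you give.
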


\section{Concluding Remarks}
\label{Section:Conclusion}

We obtain new approximation results for time-dependent scheduling with uniformly deteriorating processing time functions $p_i(s_i)=\alpha_i+\beta\cdot s_i$, the makespan and the total completion time objectives.
The approximability of the more general problems with arbitrary linear deterioration remains an intriguing open question.
We expect our bounding framework based on pseudomatchings to be useful for follow-up work.
The key technical difficulty would be extending the proposed bounds to account for the different deteriorating rates.
We leave this as a future direction.

\bibliographystyle{elsarticle-harv} 
\bibliography{refs}





\end{document}